\newtheorem{theorem}{Theorem}[section]
\newtheorem{lemma}{Lemma}[section]
\def\tire{\thinspace--\thinspace}
\title{Long Time Behavior of Infinite Harmonic Chain with $l_{2}$ Initial
  Conditions}
\author{A.A.~Lykov\thanks{Mechanics and Mathematics Faculty, Lomonosov Moscow
  State University, Leninskie Gory~1, Moscow, 119991, Russia} \and M.V.~Melikian \footnotemark[1] }
\begin{document}
\maketitle

\begin{abstract}
We consider infinite harmonic chain with $l_{2}$-initial conditions
and deterministic dynamics (no probability at all). Main results concern
the question when the solution will be uniformly bounded in time and
space in the $l_{\infty}$-norm. 
\end{abstract}

\allowdisplaybreaks 
\section{Introduction}

Consider a countable system of point particles with unit masses on
$\mathbb{R}$ with coordinates $\{x_{k}\}_{k\in\mathbb{Z}}$ and velocities
$\{v_{k}\}_{k\in\mathbb{Z}}$. We define formal energy (hamiltonian)
by the following formula: 
\[
H=\sum_{k\in\mathbb{Z}}\frac{v_{k}^{2}}{2}+\frac{\omega_{0}^{2}}{2}\sum_{k\in\mathbb{Z}}(x_{k}(t)-ka)^{2}+\frac{\omega_{1}^{2}}{2}\sum_{k\in\mathbb{Z}}(x_{k}(t)-x_{k-1}(t)-a)^{2},
\]
with parameters $a>0,\ \omega_{1}>0,\ \omega_{0}\geqslant0$. Particle
dynamics is 
defined by the infinite system of ODE: 
\begin{align}
  \ddot{x}_{k}(t)&=-\frac{\partial{H}}{\partial{x_{k}}}=-\omega_{0}^{2}(x_{k}(t)-ka)+\omega_{1}^{2}(x_{k+1}(t)-x_{k}(t)-a) \notag \\
&\quad {}   -\omega_{1}^{2}(x_{k}(t)-x_{k-1}(t)-a),\quad k\in\mathbb{Z}\label{mainEqX}
\end{align}
with initial conditions $x_{k}(0),v_{k}(0)$. The equilibrium state
(minimum of the energy) is 
\[
x_{k}=ka,\quad v_{k}=0,\quad k\in\mathbb{Z}.
\]
This means that if the initial conditions are in the equilibrium state
then the system will not evolve, i.e. $x_{k}(t)=ka,\ v_{k}(t)=0$
for all $t\geqslant0$. Let us introduce deviation variables:
\[
q_{k}(t)=x_{k}-ka,\quad p_{k}(t)=\dot{q}_{k}(t)=v_{k}(t).
\]
Our main assumption is $q(0)=\{q_{k}(0)\}_{k\in\mathbb{Z}}\in l_{2}(\mathbb{Z}),\ p(0)=\{p_{k}(0)\}_{k\in\mathbb{Z}}\in l_{2}(\mathbb{Z})$.
In the present article we study the long time behavior of $q_{k}(t)$
depending on initial conditions and parameters $a,\omega_{0},\omega_{1}$.
Namely, we are interested in the uniform boundedness (in $k$ and
$t$), the order of growth (in $t$) and exact asymptotic behavior
of $q_{k}(t)$.

It is easy to see that $q_{k}(t)$ satisfies the following system
of ODE: 
\begin{equation}
\ddot{q}_{k}=-\omega_{0}^{2}q_{k}+\omega_{1}^{2}(q_{k+1}-q_{k})-\omega_{1}^{2}(q_{k}-q_{k-1}),\quad k\in\mathbb{Z}.\label{mainEqForQ}
\end{equation}
The system of coupled harmonic oscillators (\ref{mainEqForQ}) and
its generalizations is a classical object in mathematical physics.
The existence of solution and its ergodic properties were 
studied in \cite{LanfordLebowitz}. There has been an 
 extensive research of convergence to equilibrium for infinite harmonic
chain coupled with a heat bath \cite{Bogolyubov,DudKomechSpohn,SpohnLebowitz,BPT}.
The property of uniform boundedness (by time $t$ and index $k)$
is crucial in some applications. For instance, uniform boundedness
in finite harmonic chain allows to derive Euler equation and Chaplygin
gas without any stochastics (see \cite{LM1}). Uniform boundedness
of a one-side non-symmetrical harmonic chain play important role in
some traffic flow models \cite{LMM}. We should 
note some physical papers \cite{Hemmen,Fox,FlorencioLee}. The most
closely related works to ours are \cite{Dud1,Dud2}, where the author
studied weighted $l_{2}$ norms of infinite harmonic chains, whereas
our main interest is a max-norm.

The paper is organized as follows. Section 2 contains definitions
and formulation of the main results, remainder sections contain detailed
proofs.

\section{Model and results}

\begin{lemma} If $q(0),p(0)\in l_{2}(\mathbb{Z})$ then there exists
unique solution $q(t),p(t)$ of (\ref{mainEqForQ}) which belongs
to $l_{2}(\mathbb{Z})$, i.e.\ $q(t),p(t)\in l_{2}(\mathbb{Z})$ for
all $t\geqslant0$. \end{lemma}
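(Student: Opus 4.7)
The plan is to recast equation (\ref{mainEqForQ}) as a first-order linear ODE on the Hilbert space $\mathcal{H}=l_{2}(\mathbb{Z})\oplus l_{2}(\mathbb{Z})$ with a \emph{bounded} generator, and then invoke the classical well-posedness theory.

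First I would define $A:l_{2}(\mathbb{Z})\to l_{2}(\mathbb{Z})$ by $(Aq)_{k}=\omega_{0}^{2}q_{k}+\omega_{1}^{2}(2q_{k}-q_{k+1}-q_{k-1})$. A routine estimate (or the discrete Fourier transform, which diagonalises $A$ as multiplication by the symbol $\omega_{0}^{2}+4\omega_{1}^{2}\sin^{2}(\lambda/2)$ on $L_{2}[-\pi,\pi]$) shows that $A$ is bounded, self-adjoint and non-negative, with $\|A\|\leqslant\omega_{0}^{2}+4\omega_{1}^{2}$. Setting $u=(q,p)^{T}$, system (\ref{mainEqForQ}) takes the form $\dot{u}=\mathcal{L}u$, where $\mathcal{L}=\begin{pmatrix}0 & I\\ -A & 0\end{pmatrix}$ is bounded on $\mathcal{H}$.

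For existence, since $\mathcal{L}$ is bounded, the series $e^{t\mathcal{L}}=\sum_{n\geqslant0}t^{n}\mathcal{L}^{n}/n!$ converges in operator norm uniformly on compact $t$-intervals, so $u(t):=e^{t\mathcal{L}}u(0)\in\mathcal{H}$ is $C^{1}$ in $t$ and solves the Cauchy problem by termwise differentiation. Equivalently, one may write the explicit formula $q(t)=C(t)q(0)+S(t)p(0)$ with $C(t):=\sum_{n\geqslant 0}(-1)^{n}t^{2n}A^{n}/(2n)!$ and $S(t):=\sum_{n\geqslant 0}(-1)^{n}t^{2n+1}A^{n}/(2n+1)!$; both series are entire in $t$ with bounded-operator values, so no inversion of $\sqrt{A}$ is needed and the case $\omega_{0}=0$ (where $0\in\sigma(A)$) presents no difficulty.

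For uniqueness, if $u_{1}(t),u_{2}(t)$ are two $\mathcal{H}$-valued solutions with the same initial data, then $w:=u_{1}-u_{2}$ satisfies $\dot{w}=\mathcal{L}w$ and $w(0)=0$, so $\tfrac{d}{dt}\|w\|_{\mathcal{H}}^{2}\leqslant 2\|\mathcal{L}\|\cdot\|w\|_{\mathcal{H}}^{2}$ and Gronwall's inequality forces $w\equiv 0$ (alternatively, conservation of the formal hamiltonian $H$, which is finite for $l_{2}$ data, yields the same conclusion). There is no genuine obstacle in the proof: the only point of substance is the boundedness of $A$, which is standard, and the rest is the textbook fact that a bounded linear operator on a Hilbert space generates a uniformly continuous group and hence a well-posed Cauchy problem.
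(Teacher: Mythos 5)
Your proof is correct and follows essentially the same route as the paper, which simply observes that the operator $W$ (your $-A$) is bounded on $l_{2}(\mathbb{Z})$ and then appeals to the standard well-posedness theory for linear ODEs with bounded generators in Banach spaces. You have merely written out the details (exponential series for existence, Gronwall for uniqueness) that the paper delegates to the cited references.
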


\begin{proof} This assertion is well known (see \cite{LanfordLebowitz,DalKrein,Deimling}),
and easily follows from the boundedness of the operator $W$ on $l_{2}(\mathbb{Z})$:
\[
(Wq)_{k}=-\omega_{0}^{2}q_{k}+\omega_{1}^{2}(q_{k+1}-q_{k})-\omega_{1}^{2}(q_{k}-q_{k-1}).
\]
\end{proof}

\subsection{Uniform boundedness}

The first question of our interest is uniform boundedness (in $k$
and time $t\geqslant0$) of $|q_{k}(t)|$. Define the max-norm of
$q_{k}(t)$: 
\[
M(t)=\sup_{k}|q_{k}(t)|.
\]

We shall say that the system has the property of uniform boundedness
if: 
\[
\sup_{t\geqslant0}M(t)<\infty.
\]

\begin{theorem} \label{uniBoundTh} The following assertions hold:
\begin{enumerate} 
\item If $\omega_{0}>0$, then: 
\[
\sup_{t\geqslant0}M(t)<\infty.
\]
\item If $\omega_{0}=0$ then we have the following results.
  \begin{enumerate} 
\item For all $t\geqslant0$ the following inequality holds: 
\begin{equation}
M(t)\leqslant\frac{2}{\sqrt{\omega_{1}}}||p(0)||_{2}\sqrt{t}+||q(0)||_{2} . \label{MtInEq}
\end{equation}
\item Suppose that 
\begin{equation}
\sum_{k\ne0}|p_{k}(0)|\ln|k|<\infty.\label{omzeroPcond}
\end{equation}
Then there is a constant $c>0$ such that for all $t\geqslant1$: 
\[
M(t)\leqslant\frac{\sqrt{2}}{\omega_{1}\pi}|P|\ln(t)+||q(0)||_{2}+c,\quad P=\sum_{k}p_{k}(0).
\]
\item For all $\delta>1/2$ there exists at least one initial condition
$q(0)=0,p(0)\in l_{2}(\mathbb{Z})$ such that 
\[
\lim_{t\rightarrow\infty}\frac{q_{0}(t)}{\sqrt{t}}\ln^{\delta}t=\frac{\Gamma(\delta)}{\sqrt{2\omega_1}}>0
\]
where $\Gamma$ is the gamma function.
\end{enumerate}
\end{enumerate}
\end{theorem}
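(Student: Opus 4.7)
The natural approach is to diagonalize equation~(\ref{mainEqForQ}) by the discrete Fourier transform $\hat{q}(\theta,t)=\sum_k q_k(t)e^{ik\theta}$, $\theta\in[-\pi,\pi]$. The equation becomes $\ddot{\hat q}=-\omega(\theta)^2\hat q$ with dispersion $\omega(\theta)^2=\omega_0^2+4\omega_1^2\sin^2(\theta/2)$, so
\[
\hat q(\theta,t)=\hat q(\theta,0)\cos(\omega(\theta)t)+\hat p(\theta,0)\,\frac{\sin(\omega(\theta)t)}{\omega(\theta)}.
\]
Inverting gives $q_k(t) = (A(t)\ast q(0))_k+(B(t)\ast p(0))_k$, where the sequences $A(t),B(t)$ on $\mathbb{Z}$ have Fourier symbols $\cos(\omega(\theta)t)$ and $\sin(\omega(\theta)t)/\omega(\theta)$ respectively. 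All four assertions reduce to $L_2$- or pointwise analysis of these two kernels.

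For part~1 ($\omega_0>0$), $\omega\geqslant\omega_0$ keeps both symbols uniformly in $L_2([-\pi,\pi])$, so by Parseval $\|A(t)\|_2\leqslant 1$ and $\|B(t)\|_2\leqslant 1/\omega_0$; Cauchy--Schwarz on the convolution then bounds $|q_k(t)|$ by $\|q(0)\|_2+\|p(0)\|_2/\omega_0$ uniformly in $k,t$. For 2(a), $\omega_0=0$ makes $\omega(\theta)\sim\omega_1|\theta|$ near $0$ and $1/\omega\notin L_2$; applying $\sin^2(\omega t)\leqslant\min(\omega^2t^2,1)$ and splitting the integral defining $\|B(t)\|_2^2$ at $\omega(\theta)\sim 1/t$ yields $\|B(t)\|_2\leqslant C\sqrt{t/\omega_1}$ with an explicit constant, from which Cauchy--Schwarz gives~(\ref{MtInEq}).

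For 2(b), the condition (\ref{omzeroPcond}) forces $p(0)\in l_1$ and provides a logarithmic modulus of continuity at $\theta=0$ for $\hat p(\theta,0)$. I would decompose
\[
q_k(t) = P\,B_k(t) + R_k(t),\qquad R_k(t)=\sum_{j}p_j(0)\bigl(B_{k-j}(t)-B_k(t)\bigr).
\]
The kernel $B_k(t)$ admits Bessel-type representations (e.g.\ $B_0(t)=\frac{1}{2\omega_1}\int_0^{2\omega_1 t}J_0(s)\,ds$); a sharp stationary-phase/Airy analysis, treating separately the interior, edge, and exterior of the light cone $|k|<\omega_1 t$, should produce the uniform bound $\sup_k|B_k(t)|\leqslant\frac{\sqrt{2}}{\omega_1\pi}\ln t+O(1)$. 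The remainder is controlled via a continuity estimate of the form $|B_{k-j}(t)-B_k(t)|=O(\ln(1+|j|))$ uniformly in $k,t$, which together with (\ref{omzeroPcond}) bounds $R_k$ by a constant, giving the stated inequality.

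For 2(c), the plan is to construct an even $p(0)\in l_2$ whose Fourier series exhibits a prescribed singularity at the origin of the form $\hat p(\theta,0)\sim c/(\sqrt{|\theta|}\,|\ln|\theta||^\delta)$; the condition $\delta>1/2$ is exactly what makes such a function lie in $L_2([-\pi,\pi])$, i.e.\ $p(0)\in l_2$. Then $q_0(t)=\frac{1}{2\pi}\int\hat p(\theta,0)\sin(\omega(\theta)t)/\omega(\theta)\,d\theta$ is dominated by a neighborhood of $0$, where $\omega(\theta)\sim\omega_1|\theta|$. Substituting $\theta=u/(\omega_1 t)$ extracts the factor $\sqrt t$ via the classical Fresnel-type identity $\int_0^\infty u^{-3/2}\sin u\,du=\sqrt{2\pi}$, while the slowly varying logarithm, handled by dominated convergence combined with a Mellin/Karamata-type argument, contributes the factor $\Gamma(\delta)$; the constant $c$ is then chosen to produce $\Gamma(\delta)/\sqrt{2\omega_1}$ in the limit. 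The hardest step is precisely this last one: shaping $p(0)$ so that the slowly varying factor in $\hat p$ produces exactly the claimed Tauberian constant, which requires a delicate extraction of the logarithmic asymptotics through the oscillatory integral.
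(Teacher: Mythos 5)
Your proposal is correct in substance, and the reduction to the two kernels $A(t),B(t)$ with symbols $\cos(t\omega(\lambda))$ and $\sin(t\omega(\lambda))/\omega(\lambda)$ is exactly the paper's Lemma \ref{qQPformulaLemma}; parts 2(a) and 2(b) then follow the paper's route almost verbatim (your splitting of $\|B(t)\|_2^2$ at $\omega\sim 1/t$ gives a constant below $2$, and your decomposition $q_k=P\,B_k+R_k$ with $|B_{k-j}(t)-B_k(t)|=O(\ln|j|)$ is the physical-space form of the paper's splitting $P(\lambda)=P(0)+(P(\lambda)-P(0))$ together with its Dirichlet-kernel estimate $\int_0^{2\pi}|\sin(j\lambda/2)/\sin(\lambda/2)|\,d\lambda\leqslant 2\pi(\ln|j|+c)$). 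Two places differ genuinely. For part 1 you use Parseval and Cauchy--Schwarz on the convolution ($\|A\|_2\leqslant 1$, $\|B\|_2\leqslant 1/\omega_0$), whereas the paper uses conservation of the energy $H$ and the bound $\sup_k|q_k|\leqslant\sqrt{2H}/\omega_0$; both are valid and equally short. For 2(c) you prescribe a single Fourier profile $\widehat{p}(\theta)\sim c\,|\theta|^{-1/2}|\ln|\theta||^{-\delta}$, while the paper superposes the power-law profiles $a_\alpha|\sin(\lambda/2)|^{-\alpha}$ over $\alpha\in(0,1/2)$ with weight $(1/2-\alpha)^{\delta-1}/\varphi(\alpha)$ and extracts $\sqrt{t}(\ln t)^{-\delta}\Gamma(\delta)$ from the incomplete-gamma integral $\int_0^{1/2}t^\alpha(1/2-\alpha)^{\delta-1}d\alpha$; a Laplace-type evaluation of that superposition near $\alpha=1/2$ shows it produces precisely your log-singular profile, so the constructions are morally the same. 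Two small cautions: in your version the factor $\Gamma(\delta)$ does not actually emerge from the Karamata step (the oscillatory integral yields $c\sqrt{2\pi}/(\pi\sqrt{\omega_1})\cdot\sqrt{t}(\ln t)^{-\delta}$), so it must be inserted entirely through the normalization of $c$ --- legitimate for an existence claim, but you should say so cleanly and still verify that the tail $u\gtrsim\sqrt{t}$ of the rescaled integral, where $\ln(\omega_1 t/u)$ is no longer comparable to $\ln t$, contributes $o(1)$; and in 2(b) the Airy/light-cone analysis of $\sup_k|B_k(t)|$ is unnecessary, since bounding $|e^{-ik\lambda}|$ by $1$ and estimating $\int_0^1|\sin(2\omega_1 tx)|\,x^{-1}(1-x^2)^{-1/2}dx$ already gives the stated $\frac{\sqrt{2}}{\omega_1\pi}\ln t+O(1)$ uniformly in $k$.
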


From the case 2.a we see that if $\omega_{0}=0$ and initial velocities
of the particles are all zero, then $|q_{k}(t)|$ are uniformly bounded.

The assertions 2.c is an attempt to answer the question on 
 the accuracy in the basic inequality (\ref{MtInEq}) from 2.a with
respect to the rate of growth in $t$.

\subsection{Asymptotic behavior}

Next we will formulate theorems concerning asymptotic behavior of
$q_{k}(t)$ in several cases.

Define Fourier transform of the sequence $u=\{u_{k}\}\in l_{2}(\mathbb{Z})$:
\[
\widehat{u}(\lambda)=\sum_{k}u_{k}e^{ik\lambda},\ \lambda\in\mathbb{R}.
\]
Note that $\widehat{u}(\cdot)\in L_{2}([0,2\pi])$, i.e.\ 
\[
\int_{0}^{2\pi}|\widehat{u}(\lambda)|^{2}\ d\lambda=2\pi\sum_{k}|u_{k}|^{2}<\infty.
\]

Further on we will use the Fourier transform of the initial conditions:
\[
Q(\lambda)=\widehat{q(0)}(\lambda),\quad P(\lambda)=\widehat{p(0)}(\lambda).
\]

For complex valued functions $f,g$ on $\mathbb{R}$ and constant
$c\in\mathbb{C}$ we will write $f(x)\asymp c+ g(x) / \sqrt{x}$,
if $f(x)=c+g(x) / \sqrt{x} +\bar{\bar{o}}(1 / \sqrt{x})$
as $x\rightarrow\infty$. 

\begin{theorem}[$\boldsymbol{\omega_0>0}$] \label{asympbehgz} Suppose that $\omega_{0}>0$
and $Q,P$ are of class $C^{n}(\mathbb{R})$ for some $n\geqslant2$.
Then
\begin{enumerate} 
\item For any fixed $t\geqslant0$ we have $q_{k}(t)=O(k^{-n})$. 
\item For any fixed $k\in\mathbb{Z}$ and $t\rightarrow\infty$ we
have the following asymptotic formula: 
\begin{align*}
  q_{k}(t)&\asymp\frac{1}{\sqrt{t}}\bigl(C_{1}\cos(\omega_{1}(t))+S_{1}\sin(\omega_{1}(t))\\
  &\quad {} +(-1)^{k}C_{2}\cos(\omega_{2}(t))+(-1)^{k}S_{2}\sin(\omega_{2}(t))),
\end{align*}
where 
\[
C_{1}=\frac{1}{\omega_{1}}\sqrt{\frac{\omega_{0}}{2\pi}}Q(0),\quad S_{1}=\frac{1}{\omega_{1}\omega_{0}}\sqrt{\frac{\omega_{0}}{2\pi}}P(0)
\]
\[
C_{2}=\frac{1}{\omega_{1}}\sqrt{\frac{\omega'_{0}}{2\pi}}Q(\pi),\quad S_{2}=\frac{1}{\omega_{1}\omega'_{0}}\sqrt{\frac{\omega'_{0}}{2\pi}}P(\pi),
\]
\[
\omega_{1}(t)=t\omega_{0}+\frac{\pi}{4},\quad\omega_{2}(t)=t\omega'_{0}-\frac{\pi}{4},\quad\omega'_{0}=\sqrt{\omega_{0}^{2}+4\omega_{1}^{2}}.
\]

\item Let $t=\beta|k|,\ \beta>0$ and $k\rightarrow\infty$. Put
\[
\gamma(\beta)=\beta^{2}\omega_{1}^{2}-1-\beta\omega_{0}.
\]
\begin{enumerate} 
\item If $\gamma(\beta)>0$ then 
\[
q_{k}(t)\asymp\frac{1}{\sqrt{|k|}}\Bigl(\mathcal{F}_{k}^{+}[Q]-i\mathcal{F}_{k}^{-}\Bigl[\frac{P(\lambda)}{\omega(\lambda)}\Bigr]\Bigr)
\]
where we introduce the following functionals for a complex valued function $g(\lambda)$ defined on the real
line: 
\begin{align*}
  \mathcal{F}_{k}^{\pm}[g]&=c_{+}(g(\mu_{+})e^{i\omega_{+}(k)}\pm g(-\mu_{+})e^{-i\omega_{+}(k)})\\
  &\quad {} +c_{-}(g(\mu_{-})e^{i\omega_{-}(k)}\pm g(-\mu_{-})e^{-i\omega_{-}(k)}),
\\
  \omega_{\pm}(k)&=k(\mu_{\pm}+\beta\omega(\mu_{\pm}))\pm\frac{\pi}{4}\mathrm{sign}(k),\\
  c_{\pm}&=\frac{1}{2}\sqrt{\frac{\beta\omega(\mu_{\pm})}{2\pi\Delta}},\\
  \mu_{\pm}&=-\arccos\frac{1}{\beta^{2}\omega_{1}^{2}}(1\pm\Delta),\\ 
  \Delta&=\sqrt{(\beta^{2}\omega_{1}^{2}-1)^{2}-\beta^{2}\omega_{0}^{2}},\\
  \omega(\lambda)&=\sqrt{\omega_{0}^{2}+2\omega_{1}^{2}(1-\cos\lambda)}.
\end{align*}
\item If $\gamma(\beta)=0$ and $n\geqslant3$ then $q_{k}(t)=O(k^{-3})$. 
\item if $\gamma(\beta)<0$ then $q_{k}(t)=O(k^{-n})$ for $n$ defined
  above.
\end{enumerate}
\end{enumerate}
\end{theorem}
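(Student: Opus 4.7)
My plan is to compute $q_k(t)$ explicitly via the discrete Fourier transform in $k$ and then read off all three regimes from the resulting oscillatory integral, using integration by parts for pure decay statements and the method of stationary phase for oscillatory leading terms. Transforming (\ref{mainEqForQ}) diagonalises the operator $W$: each mode $\widehat q(\lambda,t)$ satisfies $\ddot{\widehat q}=-\omega(\lambda)^{2}\widehat q$ with the smooth, $2\pi$-periodic, even dispersion $\omega(\lambda)=\sqrt{\omega_{0}^{2}+2\omega_{1}^{2}(1-\cos\lambda)}$ that is bounded away from zero because $\omega_{0}>0$. Solving mode by mode and inverting yields
\[
q_{k}(t)=\frac{1}{2\pi}\int_{-\pi}^{\pi}\Bigl(g_{+}(\lambda)\,e^{i(\omega(\lambda)t-k\lambda)}+g_{-}(\lambda)\,e^{-i(\omega(\lambda)t+k\lambda)}\Bigr)d\lambda,
\]
with $g_{\pm}(\lambda)=\tfrac12\bigl(Q(\lambda)\mp iP(\lambda)/\omega(\lambda)\bigr)\in C^{n}(\mathbb{R})$; every claim of the theorem becomes an asymptotic statement about this one family of oscillatory integrals.

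\emph{Parts 1 and 2.} In Part 1 only $e^{-ik\lambda}$ rotates fast; since the integrand is $C^{n}$ and $2\pi$-periodic, $n$-fold integration by parts gives $q_{k}(t)=O(k^{-n})$. In Part 2, $k$ is fixed and the fast oscillation is $e^{\pm i\omega(\lambda)t}$. The critical points of $\omega$ are the zeros of $\omega'(\lambda)=\omega_{1}^{2}\sin\lambda/\omega(\lambda)$, namely $\lambda=0$ and $\lambda=\pi$, with $\omega(0)=\omega_{0}$, $\omega''(0)=\omega_{1}^{2}/\omega_{0}>0$, $\omega(\pi)=\omega_{0}'$, $\omega''(\pi)=-\omega_{1}^{2}/\omega_{0}'<0$. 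Applying the standard stationary phase formula at each of the four phases with remainder $\bar{\bar{o}}(t^{-1/2})$ (controlled by the two available derivatives of $g_{\pm}$), together with the factor $e^{-ik\pi}=(-1)^{k}$ at $\lambda=\pi$ and the $\pm\pi/4$ Fresnel shifts dictated by $\mathrm{sign}(\omega'')$, reassembles exactly into the four terms with the amplitudes $C_{1},S_{1},C_{2},S_{2}$.

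\emph{Part 3.} With $t=\beta|k|$ both exponentials contribute to the phase; for $k>0$ the integral is $\int g_{\pm}(\lambda)\,e^{ik\phi_{\pm}(\lambda)}d\lambda$ with $\phi_{\pm}(\lambda)=\pm\beta\omega(\lambda)-\lambda$, and $k<0$ follows by the same analysis. The critical equation $\beta\omega_{1}^{2}\sin\lambda=\pm\omega(\lambda)$, squared and rewritten in $u=\cos\lambda$, gives a quadratic with discriminant
\[
\Delta^{2}=(\beta^{2}\omega_{1}^{2}-1)^{2}-\beta^{2}\omega_{0}^{2}=\gamma(\beta)\bigl(\gamma(\beta)+2\beta\omega_{0}\bigr),
\]
so the sign of $\gamma(\beta)$ controls the critical geometry. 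When $\gamma(\beta)>0$, the two real roots $(1\pm\Delta)/(\beta^{2}\omega_{1}^{2})$ produce four simple critical points, namely $\mu_{\pm}\in(-\pi,0)$ for $\phi_{-}$ and the reflections $-\mu_{\pm}$ for $\phi_{+}$ by evenness of $\omega$; a direct computation gives $|\phi_{\pm}''(\mu_{\pm})|=2\Delta/(\beta\omega(\mu_{\pm}))$, which turns the stationary phase prefactor into $c_{\pm}=\tfrac12\sqrt{\beta\omega(\mu_{\pm})/(2\pi\Delta)}$, and the shift $\pm\tfrac{\pi}{4}\mathrm{sign}(k)$ in $\omega_{\pm}(k)$ records $\mathrm{sign}(\phi_{\pm}'')$. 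Repackaging the four contributions against $g_{\pm}=\tfrac12(Q\mp iP/\omega)$ produces precisely $\mathcal{F}_{k}^{+}[Q]-i\mathcal{F}_{k}^{-}[P/\omega]$, modulo $\bar{\bar{o}}(k^{-1/2})$. When $\gamma(\beta)<0$, $\phi_{\pm}'$ has no real zero, so $n$-fold integration by parts gives $O(k^{-n})$. When $\gamma(\beta)=0$ the two critical points coalesce into a degenerate one at which $\phi''$ vanishes; a localised Taylor/van der Corput analysis there, combined with integration by parts away from it and using the three available derivatives of $g_{\pm}$, yields the claimed $O(k^{-3})$ bound.

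\emph{Main obstacle.} Parts 1 and 2 are routine once the integral representation is in hand. The real technical work is the algebraic bookkeeping in Part 3(a): correctly pairing each of the four stationary points with the correct exponential, matching $\mathrm{sign}(\phi_{\pm}'')$ with the $\mathrm{sign}(k)$-dependent Fresnel phases in $\omega_{\pm}(k)$, and checking that the four stationary phase contributions reassemble exactly into the compact functionals $\mathcal{F}_{k}^{\pm}$ rather than into a formula differing by a sign or a conjugation. The degenerate case 3(b) is the other delicate point, since the generic stationary phase estimate at an inflection of the phase is only $O(k^{-1/3})$; reaching $O(k^{-3})$ will require either additional cancellation between the two coalescing contributions or careful use of the full three derivatives of $g_{\pm}$.
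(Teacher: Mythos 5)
Your overall route is exactly the paper's: the same Fourier representation (Lemma \ref{qQPformulaLemma}), $n$-fold integration by parts for part 1, stationary phase at the critical points $0,\pi$ of $\omega(\lambda)$ for part 2, and for part 3 the reduction of the critical-point equation to a quadratic in $\cos\lambda$ with discriminant $\Delta^{2}=\gamma(\beta)\bigl(\gamma(\beta)+2\beta\omega_{0}\bigr)$, whose sign drives the trichotomy. The only packaging difference is that you fold $Q$ and $P$ into $g_{\pm}=\tfrac12(Q\mp iP/\omega)$ at the outset, while the paper keeps $Q_{k}$ and $P_{k}$ separate and expresses each through the functionals $E[\cdot]$ and $F[\cdot](\beta,\pm k)$ plus complex conjugation; this is cosmetic. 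One computational slip: the second derivative of the phase at the stationary points is $|h''(\mu_{\pm})|=\Delta/\bigl(\beta\omega(\mu_{\pm})\bigr)$, not $2\Delta/\bigl(\beta\omega(\mu_{\pm})\bigr)$; the factor $\tfrac12$ in the theorem's $c_{\pm}$ comes from the $\tfrac12$ already present in $g_{\pm}$ (equivalently, from splitting $\cos$ and $\sin$ into exponentials), not from the Hessian. As written, your bookkeeping would produce an amplitude off by $\sqrt{2}$, so recheck that computation.

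The genuine gap is case 3(b). You correctly observe that a nondegenerate amplitude at a degenerate stationary point (where $h''$ vanishes but $h'''$ does not) contributes at the rate $k^{-1/3}$, and no amount of smoothness of $g_{\pm}$ or cancellation between the coalescing points can improve this to $O(k^{-3})$ unless the amplitude itself vanishes there \tire{} so your proposed ``localised van der Corput plus three derivatives'' strategy cannot reach the stated bound. You should be aware that the paper's own proof of this case is a one-line citation of the degenerate stationary phase expansion in Erdelyi, which in fact gives an expansion in powers of $k^{-1/3}$; the stated $O(k^{-3})$ is almost certainly a misprint for $O(k^{-1/3})$. So your instinct that something is wrong here is sound, but the resolution is to correct the exponent, not to look for extra cancellation. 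Separately, in case 3(c) you assert that $\phi_{\pm}'$ has no real zero when $\gamma(\beta)<0$; this needs the short argument the paper gives, since for $\gamma+2\beta\omega_{0}\leqslant0$ the quadratic in $\cos\lambda$ does have real roots and one must check that they lie outside $[-1,1]$.
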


Recall that a sufficient condition on $z\in l_{2}(\mathbb{Z})$ for
$\widehat{z}\in C^{n}(\mathbb{R})$ is 
\[
\sum_{k}|k|^{n}|z_{k}|<\infty.
\]
Thus if the following series converge for some $n\geqslant2$: 
\[
\sum_{k}|k|^{n}|q_{k}(0)|<\infty,\quad\mathrm{and}\quad\sum_{k}|k|^{n}|p_{k}(0)|<\infty,
\]
then Theorem \ref{asympbehgz} holds.

\begin{theorem}[$\boldsymbol{\omega_0=0}$] \label{asympbehezero}
  Suppose that
$\omega_{0}=0$ and $Q,P\in C^{n}(\mathbb{R}),\ n\geqslant6$ then
\begin{enumerate} 

\item For any fixed $t\geqslant0$ we have $q_{k}(t)=O(k^{-n})$. 

\item For any fixed $k\in\mathbb{Z}$ and $t\rightarrow\infty$ one
has: 
\[
q_{k}(t)\asymp\frac{P(0)}{2\omega_{1}}+\frac{(-1)^{k}}{\sqrt{t}}\Bigl(C\cos\Bigl(2\omega_{1}t-\frac{\pi}{4}\Bigr)+S\sin\Bigl(2\omega_{1}t-\frac{\pi}{4}\Bigr)\Bigr),
\]
where 
\[
C=\frac{1}{\sqrt{\pi\omega_{1}}}Q(\pi),\quad S=\frac{1}{2\omega_{1}\sqrt{\pi\omega_{1}}}P(\pi).
\]
\end{enumerate}
\end{theorem}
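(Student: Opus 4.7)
The plan is to work with the explicit Fourier representation of the solution. Since $W$ acts by multiplication by $-\omega^{2}(\lambda)$ on the Fourier side, with $\omega(\lambda)=2\omega_{1}|\sin(\lambda/2)|$, the unique $l_{2}$ solution of (\ref{mainEqForQ}) is given by
\[
q_{k}(t)=\frac{1}{2\pi}\int_{-\pi}^{\pi}\Bigl[Q(\lambda)\cos(\omega(\lambda)t)+\frac{P(\lambda)}{\omega(\lambda)}\sin(\omega(\lambda)t)\Bigr]e^{-ik\lambda}\,d\lambda.
\]
A key observation is that both $\cos(\omega(\lambda)t)$ and $\sin(\omega(\lambda)t)/\omega(\lambda)$ are power series in the smooth quantity $\omega^{2}(\lambda)=2\omega_{1}^{2}(1-\cos\lambda)$, and hence $C^{\infty}$ in $\lambda$ on $\mathbb{R}$ for every fixed $t$, in spite of the pointwise non-smoothness of $\omega$ itself at $\lambda=0$. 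Assertion~1 then follows immediately: the hypothesis $Q,P\in C^{n}$ makes the full integrand $2\pi$-periodic of class $C^{n}$, and integrating by parts $n$ times against $e^{-ik\lambda}$ gives $q_{k}(t)=O(k^{-n})$ in the usual way.

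For assertion~2 I will carry out a stationary phase analysis in $t$. On $(-\pi,\pi]$ the phase $\omega(\lambda)$ has an isolated non-degenerate maximum at $\lambda=\pi$ with $\omega(\pi)=2\omega_{1}$ and $\omega''(\pi)=-\omega_{1}/2$, together with a (non-smooth) zero at $\lambda=0$. Split the integral by a smooth partition of unity into pieces supported on a neighborhood $U_{\pi}$ of $\pi$, a neighborhood $U_{0}$ of $0$, and a complementary middle region $U_{\mathrm{mid}}$. On $U_{\mathrm{mid}}$ the derivative $\omega'(\lambda)$ is bounded away from zero, so repeated integration by parts applied to the exponentials $e^{\pm i\omega(\lambda)t}$ obtained from $\cos$ and $\sin$, using the $C^{n}$ amplitudes, gives decay $O(t^{-m})$ for any fixed $m$. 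On $U_{\pi}$, the standard stationary-phase formula at a non-degenerate critical point, combined with $e^{-ik\pi}=(-1)^{k}$, produces exactly the claimed oscillatory term with the stated constants $C=Q(\pi)/\sqrt{\pi\omega_{1}}$ and $S=P(\pi)/(2\omega_{1}\sqrt{\pi\omega_{1}})$; the next correction in the expansion is $O(t^{-3/2})$ and is therefore absorbed in $\bar{\bar{o}}(t^{-1/2})$.

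The main obstacle is the analysis on $U_{0}$. Here $\sin(\omega(\lambda)t)/\omega(\lambda)$ behaves as a sinc-type kernel that concentrates at $\lambda=0$, so its integral does \emph{not} obey Riemann--Lebesgue but tends to a nonzero limit as $t\to\infty$, and this is precisely what produces the constant term of the asymptotic. My plan is to change variables to $u=\omega(\lambda)$ separately on $(0,\varepsilon)$ and on $(-\varepsilon,0)$, where $\omega$ is smooth with $|\omega'(0^{\pm})|=\omega_{1}\neq 0$. Writing $P(\lambda)e^{-ik\lambda}/|\omega'(\lambda)|=P(0)/\omega_{1}+\tilde R(\lambda)$ with $\tilde R(0)=0$, the leading piece reduces on each side to the Dirichlet-type integral $(P(0)/\omega_{1})\int_{0}^{T}\sin(y)/y\,dy\to(\pi/2)\,P(0)/\omega_{1}$ with remainder $O(t^{-1})$ from the tail, while the $\tilde R$ contribution is $O(t^{-1})$ by one integration by parts in $u$. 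Summing the two sides of the origin and dividing by $2\pi$ gives exactly $P(0)/(2\omega_{1})$. The companion cosine integral on $U_{0}$ is $O(t^{-1})$ by a single integration by parts in $u$, since no small divisor $1/\omega$ is present. The hypothesis $n\geq 6$ provides simultaneously enough derivatives for the $O(t^{-3/2})$ stationary-phase remainder on $U_{\pi}$, the fast decay on $U_{\mathrm{mid}}$, and the smoothness of the composite amplitudes $P(\lambda(u))/\omega'(\lambda(u))$ and $Q(\lambda(u))/\omega'(\lambda(u))$ on $U_{0}$. Collecting the three regional contributions yields the claimed asymptotic with $\bar{\bar{o}}(t^{-1/2})$ error.
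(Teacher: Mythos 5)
Your proof of assertion~1 coincides with the paper's (the power--series observation for $\sin(t\omega)/\omega$ followed by $n$ integrations by parts). For assertion~2, however, you take a genuinely different and, as far as I can check, correct route. The paper never does stationary phase on the singular integral $\int P(\lambda)\,\frac{\sin(t\omega(\lambda))}{\omega(\lambda)}e^{-ik\lambda}\,d\lambda$ directly: it first identifies the constant $\lim_{t\to\infty}P_k(t)=P(0)/(2\omega_1)$ by splitting off $P(0)$, invoking Riemann--Lebesgue for the regular part, and evaluating the singular part through the Bessel identity $I_k(t)=\int_0^t J_{2k}(2\omega_1 s)\,ds\to 1/(2\omega_1)$; it then removes the $1/\omega$ singularity by differentiating in $t$ (so that $\frac{d}{dt}P_k=C[P e^{-ik\lambda}]$ has a smooth amplitude), applies stationary phase to that cosine integral, and integrates the resulting expansion back from $t$ to $\infty$, which also delivers an $O(t^{-3/2})$ remainder. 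You instead localize in $\lambda$: the neighbourhood of $\pi$ gives the oscillatory terms by standard stationary phase (your constants $C$ and $S$ check out, using $\omega''(\pi)=-\omega_1/2$ and $g(\pi)=(-1)^kP(\pi)/(2\omega_1)$), the middle region is handled by non-stationary phase, and the constant $P(0)/(2\omega_1)$ emerges from the two one-sided changes of variables $u=\omega(\lambda)$ at the non-smooth point $\lambda=0$, each reducing to the Dirichlet integral $\int_0^{T}\sin y/y\,dy\to\pi/2$ with an $O(t^{-1})$ tail and an $O(t^{-1})$ contribution from the subtracted amplitude. Your version is more self-contained (no Bessel identities, no interchange of limit and integral, no term-by-term integration of an asymptotic series), at the price of confronting the $|\sin(\lambda/2)|$-type singularity of $\omega$ head on; the paper's version is less delicate at $\lambda=0$ and yields a slightly sharper $O(t^{-3/2})$ error, whereas you only claim (and only need) $\bar{\bar o}(t^{-1/2})$. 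One small point worth making explicit in a write-up: the critical point $\pi$ is an endpoint of $(-\pi,\pi]$, so the neighbourhood $U_\pi$ must be taken in the periodic sense (or the interval of integration shifted), exactly as the paper does for the point $0$ in Lemma~\ref{egaymp}.
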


\subsection{Remarks}

If $\omega_{0}=0$ then it makes sense to consider the displacement
variables: 
\[
z_{k}(t)=x_{k+1}(t)-x_{k}(t)-a,\quad u_{k}(t)=\dot{z}_{k}=v_{k+1}(t)-v_{k}(t),\quad k\in\mathbb{Z}.
\]
Suppose that $\{z_{k}(0)\}_{k\in\mathbb{Z}}\in l_{2}(\mathbb{Z})$
and $\{u_{k}(0)\}_{k\in\mathbb{Z}}\in l_{2}(\mathbb{Z})$. It is easy
to see that $z_{k}(t)$ solves (\ref{mainEqForQ}) with $\omega_{0}=0$.
Consequently all formulated assertions for $q_{k}(t)$ in the case
$\omega_{0}=0$ hold for variables $z_{k}(t)$. It is interesting
to note that the quantity $P$ from item 2.b of theorem \ref{uniBoundTh}
in terms of variables $z,u$ equals 
\[
P=\sum_{k}u_{k}(0)=\lim_{n\rightarrow\infty}(v_{n}(0)-v_{-n}(0)).
\]
So if 
\[
\sum_{k\ne0}|u_{k}|\ln|k|<\infty
\]
and initial velocities of the ``right'' particles and ``left''
particles are equal, i.e.\ $\lim_{n\rightarrow\infty}(v_{n}(0)-v_{-n}(0))=0$,
then from the case 2.b of  theorem \ref{uniBoundTh} follows uniform
boundedness of displacements $z_{k}(t)$.

\section{Proofs}

Let us introduce the energy (hamiltonian): 
\[
H=H(q,p)=\sum_{k}\frac{p_{k}^{2}}{2}+\frac{\omega_{0}^{2}}{2}\sum_{k}q_{k}^{2}+\frac{\omega_{1}^{2}}{2}\sum_{k}(q_{k}-q_{k-1})^{2}.
\]
One can easily check that the energy is conserved under the dynamics
(\ref{mainEqForQ}). It means that $H(q(t),p(t))=H(q(0),p(0))$ for
all $t\geqslant0$ where $q(t),p(t)$ solves (\ref{mainEqForQ}).
If $\omega_{0}>0$ then from the energy conservation law and the inequality:
\[
\sup_{k}|q_{k}(t)|\leqslant\frac{\sqrt{2H(q(t),p(t))}}{\omega_{0}}
\]
 the uniform boundedness of $|q_{k}(t)|$ follows, i.e. 
\[
\sup_{t\geqslant0}\sup_{k\in\mathbb{Z}}|q_{k}(t)|<\infty
\]
and so item (1) of Theorem \ref{uniBoundTh} is proved.

Let us analyze the Fourier transform of the solution (\ref{mainEqForQ}):
\[
\widehat{q(t)}(\lambda)=\sum_{k}q_{k}(t)e^{ik\lambda}.
\]
The inverse transformation is given by the formula: 
\begin{equation}
q_{k}(t)=\frac{1}{2\pi}\int_{0}^{2\pi}\widehat{q(t)}(\lambda)e^{-ik\lambda}d\lambda,\ k\in\mathbb{Z}.\label{invFourier}
\end{equation}

\begin{lemma} \label{qQPformulaLemma} The solution of (\ref{mainEqForQ})
can be expressed as 
\begin{equation}
q_{k}(t)=Q_{k}(t)+P_{k}(t),\label{qQPformula}
\end{equation}
where 
\begin{align*}
  Q_{k}(t)&=\frac{1}{2\pi}\int_{0}^{2\pi}Q(\lambda)\cos(t\omega(\lambda))e^{-ik\lambda}d\lambda,\\
  P_{k}(t)&=\frac{1}{2\pi}\int_{0}^{2\pi}P(\lambda)\frac{\sin(t\omega(\lambda))}{\omega(\lambda)}e^{-ik\lambda}d\lambda,
\\
  Q(\lambda)&=\widehat{q(0)}(\lambda),\quad P(\lambda)=\widehat{p(0)}(\lambda),\\
  \omega(\lambda)&=\sqrt{\omega_{0}^{2}+2\omega_{1}^{2}(1-\cos(\lambda))}.
\end{align*}
\end{lemma}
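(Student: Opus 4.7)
The plan is to reduce the infinite coupled ODE system \eqref{mainEqForQ} to a one-parameter family of uncoupled scalar harmonic oscillators by passing to the Fourier side. Since $q(t),p(t)\in l_2(\mathbb{Z})$ for every $t\geqslant 0$ by the preceding lemma, $\widehat{q(t)}(\cdot)$ and $\widehat{p(t)}(\cdot)$ belong to $L_2([0,2\pi])$, and the inverse formula \eqref{invFourier} recovers the coefficients. First I would multiply \eqref{mainEqForQ} by $e^{ik\lambda}$, sum over $k$, and use the standard shift identities
\[
\sum_{k}q_{k\pm 1}(t)e^{ik\lambda}=e^{\mp i\lambda}\widehat{q(t)}(\lambda),
\]
together with the fact that $-\omega_0^2+\omega_1^2(e^{-i\lambda}+e^{i\lambda}-2)=-\omega(\lambda)^2$, to obtain the pointwise (in $\lambda$) ODE
\[
\partial_t^2\,\widehat{q(t)}(\lambda)=-\omega(\lambda)^2\,\widehat{q(t)}(\lambda),
\qquad
\widehat{q(0)}(\lambda)=Q(\lambda),\ \partial_t\widehat{q(t)}(\lambda)\big|_{t=0}=P(\lambda).
\]

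The interchange of $\partial_t^2$ and the infinite sum is legitimate because $\ddot q_k=(Wq)_k$ with $W$ bounded on $l_2(\mathbb{Z})$, so $\{\ddot q_k(t)\}\in l_2(\mathbb{Z})$ and term-by-term Fourier transformation is justified by Plancherel and the continuity of $W$. For each fixed $\lambda$ the resulting scalar equation is a simple harmonic oscillator with frequency $\omega(\lambda)\geqslant 0$, whose unique solution is
\[
\widehat{q(t)}(\lambda)=Q(\lambda)\cos(t\omega(\lambda))+P(\lambda)\,\frac{\sin(t\omega(\lambda))}{\omega(\lambda)},
\]
with the understanding that $\sin(t\omega(\lambda))/\omega(\lambda)$ is interpreted by continuity as $t$ when $\omega(\lambda)=0$; this happens only in the case $\omega_0=0$ at $\lambda=0$, and the resulting function of $\lambda$ is still bounded (by $t$) and measurable, hence the product with $P\in L_2$ remains in $L_2([0,2\pi])$.

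Finally, I would apply the inverse Fourier formula \eqref{invFourier}, split the integrand along the two terms above, and identify the two resulting integrals with $Q_k(t)$ and $P_k(t)$, yielding \eqref{qQPformula}. The uniqueness part of Lemma~2.1 guarantees that the $l_2$-valued function $t\mapsto\{q_k(t)+p_k(t)\}$ so constructed coincides with the solution of \eqref{mainEqForQ}. The only genuine subtlety is the $\omega_0=0$ case, where $1/\omega(\lambda)$ has an integrable singularity at $\lambda=0$ but $\sin(t\omega)/\omega$ remains bounded; apart from this minor point the argument is a direct diagonalization of the translation-invariant operator $W$ by the Fourier transform.
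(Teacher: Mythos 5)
Your proposal is correct and follows essentially the same route as the paper: Fourier-transform the system to diagonalize it into scalar harmonic oscillators of frequency $\omega(\lambda)$, solve, and invert via \eqref{invFourier}. The extra care you take with the interchange of $\partial_t^2$ and the sum and with the removable singularity of $\sin(t\omega)/\omega$ at $\omega_0=0$ is welcome but does not change the argument.
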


\begin{proof} Using (\ref{mainEqForQ}) we obtain : 
\begin{align*}
  \frac{d^{2}}{dt^{2}}\widehat{q(t)}(\lambda)&=-\omega_{0}^{2}\widehat{q(t)}(\lambda)+\omega_{1}^{2}\sum_{k}q_{k+1}(t)e^{ik\lambda}+\omega_{1}^{2}\sum_{k}q_{k-1}(t)e^{ik\lambda}-2\omega_{1}^{2}\widehat{q(t)}(\lambda)\\
  &=-\omega^{2}(\lambda)\widehat{q(t)}(\lambda).
\end{align*}
Thus $\widehat{q(t)}(\lambda)$ for fixed $\lambda$ is coordinate
of the harmonic oscillator with the frequency $\omega(\lambda)$ and
so the solution of equation for $\widehat{q(t)}(\lambda)$ is 
\[
\widehat{q(t)}(\lambda)=\widehat{q(0)}(\lambda)\cos(t\omega(\lambda))+\widehat{p(0)}(\lambda)\frac{\sin(t\omega(\lambda))}{\omega(\lambda))}.
\]
From this equality and the formula of the inverse transformation (\ref{invFourier})
lemma follows.
\end{proof}

\subsection{Proof of Theorem \ref{uniBoundTh}}

The case $\omega_{0}>0$ we have considered above. Now suppose that
$\omega_{0}=0$. We will use the representation (\ref{qQPformula})
and some upper bounds for $Q_{k},P_{k}$.

At first we will prove the part 2.a of Theorem \ref{uniBoundTh}.
From the Cauchy\tire Bunyakovsky\tire Schwarz inequality we obtain the inequalities:
\begin{align*}
  |Q_{k}(t)|&\leqslant\sqrt{\frac{1}{2\pi}\int_{0}^{2\pi}|Q(\lambda)|^{2}d\lambda}\sqrt{\frac{1}{2\pi}\int_{0}^{2\pi}\cos^{2}(t\omega(\lambda))d\lambda}\\
  &\leqslant\sqrt{\frac{1}{2\pi}\int_{0}^{2\pi}|Q(\lambda)|^{2}d\lambda}=||q(0)||_{2} ,
\\
  |P_{k}(t)|&\leqslant\sqrt{\frac{1}{2\pi}\int_{0}^{2\pi}|P(\lambda)|^{2}d\lambda}\sqrt{\frac{1}{2\pi}\int_{0}^{2\pi}\frac{\sin^{2}(t\omega(\lambda))}{\omega^{2}(\lambda))}d\lambda}\\
  &=||p(0)||_{2}\sqrt{\frac{1}{2\pi}\int_{0}^{2\pi}\frac{\sin^{2}(t\omega(\lambda))}{\omega^{2}(\lambda))}d\lambda}.
\end{align*}
In the case $\omega_{0}=0$ one has $\omega(\lambda)=\sqrt{2\omega_{1}^{2}(1-\cos(\lambda))}=2\omega_{1}\sin(\lambda/2)$
and 
\begin{align*}
  I&=\int_{0}^{2\pi}\frac{\sin^{2}(t\omega(\lambda))}{\omega^{2}(\lambda)}d\lambda=2\int_{0}^{\pi}\frac{\sin^{2}(2\omega_{1}t\sin(u))}{(2\omega_{1}\sin(u))^{2}}du\\
  &=4\int_{0}^{\pi / 2}\frac{\sin^{2}(2\omega_{1}t\sin(u))}{(2\omega_{1}\sin(u))^{2}}du.
\end{align*}
Substituting $x=\sin u$ in the last integral we get: 
\begin{equation}
  I=\frac{1}{\omega_{1}^{2}}\int_{0}^{1}\frac{\sin^{2}(2\omega_{1}tx)}{x^{2}}\frac{1}{\sqrt{1-x^{2}}}dx
  =\frac{1}{\omega_{1}^{2}}\biggl(\int_{0}^{1 / \sqrt{2}}\ldots dx+\int_{1 / \sqrt{2}}^{1}\ldots dx\biggr). \label{Irepresent}
\end{equation}
The integrals in the latter formula will be estimated separately.
\begin{align*}
  \int_{0}^{1 / \sqrt{2}} \frac{\sin^{2}(2\omega_{1}tx)}{x^{2}}\frac{1}{\sqrt{1-x^{2}}}dx&\leqslant\sqrt{2}\int_{0}^{1 / \sqrt{2}}
                                                                                           \frac{\sin^{2}(2\omega_{1}tx)}{x^{2}}dx\\
  &=2\omega_{1}t\sqrt{2}\int_{0}^{\sqrt{2}\omega_{1}t}\frac{\sin^{2}(x)}{x^{2}}dx
\\
                                                                                         &\leqslant2\omega_{1}t\sqrt{2}\int_{0}^{\infty}\frac{\sin^{2}(x)}{x^{2}}dx\\
  &=2\omega_{1}t\sqrt{2}\frac{\pi}{2}=\pi\omega_{1}t\sqrt{2}.
\end{align*}
 One can find the value of the last integral in \cite{GR}, p.\thinspace 713,
3.821 (9). For the second integral in (\ref{Irepresent}) we have
\begin{align*}
  \int\limits_{1/ \sqrt{2}}^{1}\frac{\sin^{2}(2\omega_{1}tx)}{x^{2}}\frac{1}{\sqrt{1-x^{2}}}dx&\leqslant
                                                                                                2\omega_{1}t\int\limits_{1 / \sqrt{2}}^{1}\frac{|\sin(2\omega_{1}tx)|}{x}\frac{1}{\sqrt{1-x^{2}}}dx\\
  &\leqslant2\omega_{1}t\sqrt{2}\int\limits_{1/ \sqrt{2}}^{1}\frac{1}{\sqrt{1-x^{2}}}dx
=2\omega_{1}t\sqrt{2}\frac{\pi}{4}=\pi\omega_{1}t\frac{1}{\sqrt{2}}.
\end{align*}
Thus we obtain the inequality for $P_{k}$: 
\[
|P_{k}(t)|\leqslant\sqrt{\frac{1}{2\pi\omega_{1}^{2}}\left(\pi\omega_{1}t\sqrt{2}+\pi\omega_{1}t\frac{1}{\sqrt{2}}\right)}||p(0)||_{2}\leqslant2\sqrt{\frac{t}{\omega_{1}}}||p(0)||_{2}.
\]
This proves the case 2.a of Theorem \ref{uniBoundTh}.

Next we will check assertion 2.b. Condition (\ref{omzeroPcond})
implies $p(0)\in l_{1}(\mathbb{Z})$ and consequently $P(\lambda)=\widehat{p(0)}(\lambda)$
is a bounded continuous function on $\mathbb{R}$. We have the following
representation of $P_{k}(t)$: 
\begin{equation}
P_{k}(t)=\frac{1}{2\pi}\int_{0}^{2\pi}\frac{P(\lambda)-P(0)}{\omega(\lambda)}\sin(t\omega(\lambda))e^{-ik\lambda}d\lambda+\frac{P(0)}{2\pi}\int_{0}^{2\pi}\frac{\sin(t\omega(\lambda))}{\omega(\lambda)}e^{-ik\lambda}d\lambda.\label{PkRepresF}
\end{equation}
We estimate the first integral using condition (\ref{omzeroPcond}):
\begin{align*}
  I_{k}&=\biggl|\int_{0}^{2\pi}\frac{P(\lambda)-P(0)}{\omega(\lambda)}\sin(t\omega(\lambda))e^{-ik\lambda}d\lambda\biggr|\\
  &\leqslant\int_{0}^{2\pi}\Bigl| \frac{P(\lambda)-P(0)}{\omega(\lambda)}\Bigr| d\lambda
\leqslant\sum_{j}|p_{j}(0)|\int_{0}^{2\pi}\Bigl| \frac{e^{ij\lambda}-1}{2\omega_{1}\sin( \lambda / 2)}\Bigr| d\lambda .
\end{align*}
On the other hand 
\begin{align*}
  \int_{0}^{2\pi}\Bigl| \frac{e^{ij\lambda}-1}{\sin(\lambda / 2)}\Bigr| d\lambda&=
   \int_{0}^{2\pi}\Bigl| \frac{\sin(j\lambda / 2)}{\sin(\lambda / 2)}\Bigr| d\lambda=2\int_{0}^{\pi}\Bigl| \frac{\sin(ju)}{\sin u}\Bigr| d\lambda\\
  &=4\int_{0}^{\pi / 2}\Bigl| \frac{\sin(ju)}{\sin u}\Bigr| d\lambda\leqslant^{(1)}
    2\pi\int_{0}^{\pi / 2}\Bigl| \frac{\sin(ju)}{u}\Bigr| d\lambda\\
  &\leqslant^{(2)}2\pi(\ln|j|+c).
\end{align*}
In the inequality $^{(1)}$ we have used the fact that $\sin x\geqslant (2 / \pi) x$
for $x\in[0; \pi / 2]$ and $^{(2)}$ follows from Lemma \ref{absSinIneq}
below. Thus we have obtained 
\[
I_{k}\leqslant\frac{\pi}{\omega_{1}}\sum_{j\ne0}|p_{j}(0)|(\ln|j|+c)<\infty.
\]
Further we will estimate the second integral in (\ref{PkRepresF}):
\begin{align*}
  J_{k}&=\biggl| \int_{0}^{2\pi}\frac{\sin(t\omega(\lambda))}{\omega(\lambda)}e^{-ik\lambda}d\lambda\biggr| \leqslant
         2\int_{0}^{\pi}\frac{|\sin(2\omega_{1}t\sin u)|}{2\omega_{1}\sin u}du\\
  &=\frac{2}{\omega_{1}}\int_{0}^{1}\frac{|\sin(2\omega_{1}tx)|}{x}\frac{1}{\sqrt{1-x^{2}}}dx
\\
&\leqslant\frac{2}{\omega_{1}}\biggl(\int_{0}^{1 / \sqrt{2}}\ldots dx+\int_{1 / \sqrt{2}}^{1}\ldots dx\biggr) \leqslant\frac{2}{\omega_{1}}(\sqrt{2}\ln t+c)
\end{align*}
for some constant $c>0$ not depending on $t\geqslant 1$ and $k$. In the latter
inequality we again have applied lemma \ref{absSinIneq}.

Finally from (\ref{PkRepresF}) and due to the bounds for $I_{k},J_{k}$
we obtain: 
\[
|P_{k}(t)|\leqslant\frac{\sqrt{2}}{\omega_{1}\pi}|P(0)|\ln t+c
\]
for some constant $c>0$ not depending on $t\geqslant 1$ and $k$. This completes
the proof of the part 2.b.

\begin{lemma} \label{absSinIneq} For all $b>0$ and  all $t_0>0$ there is a constant
$c>0$ such that for any $t\geqslant t_0$ the following inequality holds:
\[
\int_{0}^{b}\frac{|\sin(tx)|}{x}dx\leqslant\ln t+c.
\]
\end{lemma}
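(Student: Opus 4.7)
The plan is to eliminate the parameter $t$ from the integrand by a linear substitution, reducing the question to the classical asymptotic bound $\int_0^T \frac{|\sin u|}{u}\,du = \ln T + O(1)$ as $T \to \infty$.

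First I would substitute $u = tx$, noting that $dx/x = du/u$. This transforms the integral into
\[
\int_0^b \frac{|\sin(tx)|}{x}\,dx = \int_0^{bt} \frac{|\sin u|}{u}\,du,
\]
so the lemma reduces to showing that for any fixed $b>0,t_0>0$ there is a constant $c>0$ with
\[
\int_0^{bt} \frac{|\sin u|}{u}\,du \leqslant \ln t + c \quad\text{for all } t\geqslant t_0.
\]

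Next I would split into two regimes. If $bt \leqslant 1$, the bound $|\sin u|/u \leqslant 1$ on $[0,1]$ gives $\int_0^{bt} \frac{|\sin u|}{u}\,du \leqslant bt \leqslant 1$, which is dominated by $\ln t + c$ provided $c \geqslant 1 - \ln t_0$. If $bt > 1$, I would split at $u=1$:
\[
\int_0^{bt}\frac{|\sin u|}{u}\,du = \int_0^1 \frac{|\sin u|}{u}\,du + \int_1^{bt}\frac{|\sin u|}{u}\,du.
\]
The first term is an absolute constant (at most $1$), while for the second I would use $|\sin u|\leqslant 1$ to obtain
\[
\int_1^{bt}\frac{|\sin u|}{u}\,du \leqslant \int_1^{bt}\frac{du}{u} = \ln(bt) = \ln t + \ln b.
\]
Combining the two regimes and choosing $c = 1 + \max(\ln b,\, -\ln t_0)$ (or simply $c=1+|\ln b|+|\ln t_0|$) completes the proof.

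There is no real obstacle here; this is a routine substitution plus the trivial bound $|\sin u|\leqslant 1$. The only small point of care is handling the degenerate regime $bt\leqslant 1$, where the right-hand side $\ln t + c$ could a priori be small or even negative, but this is absorbed into the constant $c$ using the lower bound $t\geqslant t_0$.
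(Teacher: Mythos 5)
Your proposal is correct and follows essentially the same route as the paper: the substitution $u=tx$, a split of $\int_0^{bt}|\sin u|/u\,du$ at a fixed point, a bounded contribution near the origin, and the trivial bound $|\sin u|\leqslant 1$ giving the logarithm on the tail (the paper splits at $t_0 b$ rather than at $1$, which is immaterial). Your explicit handling of the degenerate regime $bt\leqslant 1$ is a minor extra care the paper glosses over; nothing else differs.
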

\begin{proof} Substituting $tx=y$ we get: 
\begin{align*}
  \int_{0}^{b}\frac{|\sin(tx)|}{x}dx&=\int_{0}^{tb}\frac{|\sin y|}{y}dy=\int_{0}^{t_0 b}\frac{|\sin y|}{y}dy+\int_{t_0 b}^{tb}\frac{|\sin y|}{y}dy\\
  &\leqslant c+\int_{t_0 b}^{tb}\frac{1}{y}dy=\ln -\ln t_0+c.
\end{align*}
This proves the assertion. \end{proof}

Finally we will prove part 2.c of Theorem \ref{uniBoundTh}. We
will construct the required initial condition in two steps. At first
step for any $0<\alpha<1/2$ we will find initial conditions (depending
on $\alpha$) such that the corresponding solution satisfies $\lim_{t\rightarrow\infty}q_{0}^{(\alpha)}(t)/t^{\alpha}>0$.
At the next step we will integrate these initial conditions with an
appropriate weight and prove that the resulting function gives us
the answer.

Firstly we prove the assertion if $\omega_{1}=1/2$.
Consider initial conditions $q^{(\alpha)}(0),p^{(\alpha)}(0)$ with
the following Fourier transforms: 
\[
Q^{(\alpha)}(\lambda)=0,\quad P^{(\alpha)}(\lambda)=\frac{a_{\alpha}}{(|\omega(\lambda)|)^{\alpha}}=\frac{a_{\alpha}}{|\sin \lambda / 2|^{\alpha}}
\]
where $0<\alpha< 1/2$ and the constant $a_{\alpha}>0$ is
chosen so that 
\[
||P^{(\alpha)}(\lambda)||_{L_{2}([0,2\pi])}^{2}=\int_{0}^{2\pi}|P^{(\alpha)}(\lambda)|^{2}d\lambda=1.
\]
 Exact formula for $a_{\alpha}$  will be given  below. It is obvious
that $P^{(\alpha)}(\lambda) \! \in \! L_{2}([0,2\pi])$. So the corresponding
initial conditions 
\[
q_{k}^{(\alpha)}(0)=0,\quad p_{k}^{(\alpha)}(0)=\frac{1}{2\pi}\int_{0}^{2\pi}P^{(\alpha)}(\lambda)e^{-ik\lambda}d\lambda
\]
lie in $l_{2}(\mathbb{Z})$. From (\ref{qQPformula})
we have 
\[
q_{0}(t)=q_{0}^{(\alpha)}(t)=P_{0}(t)=\frac{a_{\alpha}}{2\pi}\int_{0}^{2\pi}\frac{\sin(t\sin \lambda / 2)}{|\sin \lambda / 2|^{\alpha+1}}d\lambda.
\]

\begin{lemma} \label{qalphatlemma} For all $t>0$ the following
equality holds: 
\[
q_{0}^{(\alpha)}(t)=\varphi(\alpha)t^{\alpha}+R(\alpha,t),\quad\varphi(\alpha)=2a_{\alpha}\frac{\Gamma(1-\alpha)}{\pi\alpha}\cos\frac{\pi\alpha}{2},
\]
where for the remainder term $R$ we have 
\[
|R(\alpha,t)|\leqslant a_{\alpha}\Bigl(3+\frac{2}{t}\Bigr),
\]
where $\Gamma$ is the gamma function.
\end{lemma}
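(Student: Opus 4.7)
The first step is to simplify the integral representation of $q_0^{(\alpha)}(t)$. Substituting $u = \lambda/2$, using the evenness of the integrand about $u = \pi/2$ (since $\sin(\pi-u) = \sin u$), and then $x = \sin u$ (so $du = dx/\sqrt{1-x^2}$), I reach
\[
q_0^{(\alpha)}(t) = \frac{2 a_\alpha}{\pi}\int_0^1 \frac{\sin(tx)}{x^{\alpha+1}\sqrt{1-x^2}}\,dx.
\]

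Next I will isolate the piece that scales like $t^\alpha$. Writing $1/\sqrt{1-x^2} = 1 + (1/\sqrt{1-x^2} - 1)$ and extending the principal piece to the full half-line by adding and subtracting the tail integral, I arrive at the decomposition
\[
q_0^{(\alpha)}(t) = \frac{2a_\alpha}{\pi}\bigl[I_\infty(t) - T(t) + K(t)\bigr],
\]
where $I_\infty(t) = \int_0^\infty \sin(tx)/x^{\alpha+1}\,dx$ is the main term, $T(t) = \int_1^\infty \sin(tx)/x^{\alpha+1}\,dx$ is the tail, and $K(t) = \int_0^1 (\sin(tx)/x^{\alpha+1})(1/\sqrt{1-x^2} - 1)\,dx$ is the correction coming from the weight. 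The rescaling $y = tx$ combined with the classical Mellin identity (which, for this range of exponents, follows via one integration by parts from the standard cosine Mellin transform) evaluates $I_\infty(t) = t^\alpha\cdot C(\alpha)$ with the explicit constant $C(\alpha)$ entering the definition of $\varphi(\alpha)$.

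For the tail I would apply one integration by parts: $T(t) = \cos(t)/t - ((\alpha+1)/t)\int_1^\infty \cos(tx)/x^{\alpha+2}\,dx$, and then $\int_1^\infty x^{-\alpha-2}\,dx = 1/(\alpha+1)$ gives the uniform bound $|T(t)| \leq 2/t$. For the correction $K(t)$, the elementary identity $1 - \sqrt{1-x^2} = x^2/(1+\sqrt{1-x^2}) \leq x^2$ implies $1/\sqrt{1-x^2} - 1 \leq x^2/\sqrt{1-x^2}$ on $[0,1]$. Using $|\sin(tx)| \leq 1$ and splitting at $x = 1/\sqrt{2}$, one piece is controlled by $\sqrt{2}\,x^{1-\alpha}$, which is integrable since $\alpha < 1$; on $[1/\sqrt{2},1]$ the factor $x^{-\alpha-1}$ is bounded uniformly in $\alpha \in (0,1/2)$ while $\int_{1/\sqrt{2}}^1 dx/\sqrt{1-x^2} = \pi/4$. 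Together this yields a constant bound on $|K(t)|$ uniform in both $t$ and $\alpha$.

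The analytical ingredients are standard; the main obstacle is bookkeeping the numerical constants to produce the compact bound $a_\alpha(3 + 2/t)$ of the statement rather than a comparable but less elegant expression. In particular, one must verify that, after multiplication by the factor $2/\pi$, the uniform estimate on $|K(t)|$ contributes at most $3a_\alpha$, and that the crude $|T(t)| \leq 2/t$ bound survives the same factor.
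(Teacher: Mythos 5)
Your proposal is correct and follows essentially the same route as the paper's proof: reduce to $\frac{2a_\alpha}{\pi}\int_0^1\sin(tx)\,x^{-\alpha-1}(1-x^2)^{-1/2}dx$ (the paper reaches the equivalent point via a three-piece split of $[0,\pi]$ and substitutes only on $[0,\pi/4]$), peel off the weight correction using $1/\sqrt{1-x^2}-1\leqslant x^2/\sqrt{1-x^2}$, extend the principal integral to $(0,\infty)$ and evaluate it by a Mellin/Fresnel identity, and kill the tail by one integration by parts; your cut points ($x=1$ for the tail instead of $x=t/\sqrt{2}$ after rescaling) change nothing essential, and your constants do fit under $a_\alpha(3+2/t)$. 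One substantive remark: if you actually carry out the evaluation you propose (integration by parts followed by $\int_0^\infty u^{-\alpha}\cos u\,du=\Gamma(1-\alpha)\cos\bigl(\tfrac{\pi}{2}(1-\alpha)\bigr)$), you get $\int_0^\infty u^{-\alpha-1}\sin u\,du=\frac{\Gamma(1-\alpha)}{\alpha}\sin\frac{\pi\alpha}{2}$, not $\frac{\Gamma(1-\alpha)}{\alpha}\cos\frac{\pi\alpha}{2}$ as in the stated $\varphi(\alpha)$; the limit $\pi/2$ of this integral as $\alpha\to0^+$ confirms the sine. So your claim that the constant is ``the one entering the definition of $\varphi(\alpha)$'' is true only after correcting the trigonometric factor in the lemma \tire\ a slip in the paper's evaluation of the Bohmer integral that is harmless downstream, since $\varphi$ is cancelled by the weight $w_\varepsilon$ and only its positivity and integrability of $1/\varphi$ matter.
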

\begin{proof}
From the definition we have 
\[
  q_{0}^{(\alpha)}(t)=\frac{a_{\alpha}}{\pi}\int_{0}^{\pi}\frac{\sin(t\sin\lambda)}{\sin^{\alpha+1}\lambda}d\lambda=
  \frac{a_{\alpha}}{\pi} \Bigl( I\Bigl[ 0,\frac{\pi}{4}\Bigr]+I\Bigl[\frac{\pi}{4},\frac{3\pi}{4}\Bigr]+I\Bigl[\frac{3\pi}{4},\pi\Bigr]\Bigr),
\]
where 
\[
I[a,b]=\int_{a}^{b}\frac{\sin(t\sin\lambda)}{\sin^{\alpha+1}\lambda}d\lambda.
\]
\par 
The second integral can easily be estimated: 
\[
  \Big| I \Bigl[\frac{\pi}{4},\frac{3\pi}{4}\Bigr] \Big| \leqslant\int_{\pi / 4}^{3\pi/4}
  \frac{d\lambda}{(\sin\lambda)^{\alpha+1}}\leqslant\frac{\pi}{2}(\sqrt{2})^{\alpha+1}\leqslant\pi2^{-1/4}<4.
\]
The third integral $I[ 3\pi / 4,\pi]$ evidently equals to the
first one $I[0, \pi / 4]$ (to see this it is sufficient to make
the substitution $x=\pi-\lambda$). Let us evaluate the first integral.
Substituting $x=\sin\lambda$ we have: 
\begin{align}
  I\Bigl[0,\frac{\pi}{4}\Bigr]&=\int_{0}^{1/ \sqrt{2}} \frac{\sin(tx)}{x^{\alpha+1}\sqrt{1-x^{2}}}dx \label{Izeropifour}   \\
                    &=\int_{0}^{1 / \sqrt{2}} \frac{\sin(tx)}{x^{\alpha+1}}dx+\int_{0}^{1 / \sqrt{2}} \Bigl(\frac{\sin(tx)}{x^{\alpha+1}\sqrt{1-x^{2}}}-\frac{\sin(tx)}{x^{\alpha+1}}\Bigr) dx. \nonumber
                      \end{align}
Due to the mean-value theorem for all $0\leqslant s\leqslant 1/2$
the following inequality holds: 
\[
\Bigl| \frac{1}{\sqrt{1-s}}-1\Bigr| \leqslant s\max_{0\leq\theta\leq s}\frac{1}{2(\sqrt{1-\theta})^{3}}\leqslant s\frac{  1} {2 \bigl(\sqrt{1- 1/2}\bigr)^{3}}=s\sqrt{2}.
\] 
Consequently for the second integral in (\ref{Izeropifour}) we have
the estimates: 
\begin{align*}
  \biggl| \int_{0}^{1 / \sqrt{2}} \Bigl(\frac{\sin(tx)}{x^{\alpha+1}\sqrt{1-x^{2}}}-\frac{\sin(tx)}{x^{\alpha+1}}\Bigr) dx\biggr| &
       \leqslant\int_{0}^{1 / \sqrt{2}} \frac{|\sin(tx)|}{x^{\alpha+1}}\Bigl| \frac{1}{\sqrt{1-x^{2}}}-1\Bigr| dx \\
  &\leqslant\int_{0}^{1 / \sqrt{2}} \frac{\sqrt{2}x^{2}}{x^{\alpha+1}}dx
\\
&=\sqrt{2}\frac{1}{2-\alpha}\frac{1}{(\sqrt{2})^{2-\alpha}}\leqslant\frac{1}{(\sqrt{2})^{1-\alpha}}<2.
\end{align*}
The first integral in (\ref{Izeropifour}) can be expressed as: 
\[
  \int_{0}^{1 / \sqrt{2}}\frac{\sin(tx)}{x^{\alpha+1}}dx=t^{\alpha}\int_{0}^{t / \sqrt{2}}
  \frac{\sin u}{u^{\alpha+1}}du=t^{\alpha}\int_{0}^{+\infty}\frac{\sin u}{u^{\alpha+1}}du-t^{\alpha}\int_{t / \sqrt{2}}^{+\infty}\frac{\sin u}{u^{\alpha+1}}du.
\]
In the latter formula the first integral is Bohmer integral (generalized
Fresnel integral) which value can be found in \cite{GR}, p.\thinspace 648,
3.712: 
\begin{align*}
  \int_{0}^{\infty}\frac{\sin u}{u^{\alpha+1}}du&=\frac{1}{\alpha(1-\alpha)}\int_{0}^{\infty}\cos y^{1/(1-\alpha)}\,dy\\
  &=\frac{1}{\alpha(1-\alpha)}\frac{\Gamma(1-\alpha)\sin(\frac{\pi}{2}(1-\alpha))}{1/(1-\alpha)}=\frac{\Gamma(1-\alpha)}{\alpha}\cos\frac{\pi\alpha}{2}.
\end{align*}
Integrating by parts we estimate the remainder term: 
\begin{align*}
  \biggl| \int_{t / \sqrt{2}}^{+\infty} \frac{\sin u}{u^{\alpha+1}}du\biggr| &=
                                                                               \biggl| \frac{\cos(t / \sqrt{2})}{(t / \sqrt{2})^{\alpha+1}}
                                                                               -(1+\alpha)\int_{t/ \sqrt{2}}^{\infty}\frac{\cos u}{u^{\alpha+2}}du\biggr| \\
                &\leqslant\frac{2}{t^{\alpha+1}}+(1+\alpha)\int_{t / \sqrt{2}}^{\infty}\frac{1}{u^{\alpha+2}}du\leqslant\frac{4}{t^{\alpha+1}}.
\end{align*}
These inequalities complete the proof.
\end{proof}

For any $0<\varepsilon<1/2$ define a weight function 
\[
w_{\varepsilon}(\alpha)=\frac{1}{\varphi(\alpha)}\ \frac{1}{( 1/2-\alpha)^{1/2-\varepsilon}}
\]
where $\varphi(\alpha)$ is defined in Lemma \ref{qalphatlemma}.
\begin{lemma} $w_{\varepsilon}(\alpha)$ is absolutely integrable
w.r.t.\ $\alpha$ on $[0, 1/2]$: 
\[
\int_{0}^{1/2}w_{\varepsilon}(\alpha)d\alpha<\infty.
\]
\end{lemma}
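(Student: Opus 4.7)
The plan is to establish a uniform upper bound on $w_{\varepsilon}$ that is integrable on $[0,1/2]$. Since $\Gamma(1-\alpha)$ and $\cos(\pi\alpha/2)$ are continuous and bounded away from zero on $[0,1/2]$, the analysis reduces to controlling the quantity
\[
\tilde{w}_{\varepsilon}(\alpha) = \frac{\alpha}{a_{\alpha}} \cdot \frac{1}{(1/2 - \alpha)^{1/2 - \varepsilon}}
\]
up to a multiplicative constant. On every compact subinterval of $(0, 1/2)$, $a_{\alpha}$ is continuous and strictly positive (by continuity in $\alpha$ of the defining integral), so $\tilde{w}_{\varepsilon}$ is bounded there, and only the endpoints $\alpha = 0$ and $\alpha = 1/2$ require attention.

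Near $\alpha = 0$, dominated convergence (with majorant $|\sin(\lambda/2)|^{-1/2}$, valid once $\alpha \leq 1/4$) shows that $a_{\alpha}^{-2} = \int_{0}^{2\pi}|\sin(\lambda/2)|^{-2\alpha} d\lambda \to 2\pi$ as $\alpha \to 0^{+}$. In particular $a_{\alpha}$ stays bounded away from zero near the origin, so $\alpha/a_{\alpha} \to 0$, and the factor $(1/2 - \alpha)^{-(1/2-\varepsilon)}$ is harmless. Thus $\tilde{w}_{\varepsilon}$ is bounded on a neighborhood of $0$.

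Near $\alpha = 1/2$, the task is to produce a quantitative lower bound on $a_{\alpha}$. Using the elementary inequality $\sin x \geq (2/\pi) x$ on $[0, \pi/2]$, we get $\sin(\lambda/2) \geq \lambda/\pi$ for $\lambda \in [0, \pi]$, hence
\[
\int_{0}^{\pi} |\sin(\lambda/2)|^{-2\alpha} d\lambda \leq \pi^{2\alpha} \int_{0}^{\pi} \lambda^{-2\alpha} d\lambda = \frac{\pi}{1 - 2\alpha}.
\]
The substitution $\lambda \mapsto 2\pi - \lambda$ gives the same order for the integral over $[\pi, 2\pi]$. Consequently $a_{\alpha}^{-2} \leq C/(1/2 - \alpha)$, i.e.\ $a_{\alpha} \geq c \sqrt{1/2 - \alpha}$, and therefore
\[
\tilde{w}_{\varepsilon}(\alpha) \leq \frac{C'}{(1/2 - \alpha)^{1 - \varepsilon}},
\]
which is integrable on a neighborhood of $1/2$ because $1 - \varepsilon < 1$.

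The only nontrivial step is the quantitative lower bound $a_{\alpha} \gtrsim \sqrt{1/2 - \alpha}$ at the right endpoint; the continuity arguments at the left endpoint and on compact subintervals are routine. A sharper asymptotic $a_{\alpha} \sim c\sqrt{1/2 - \alpha}$ can be obtained via the beta-function representation of $\int_{0}^{\pi} \sin^{-2\alpha}(\lambda/2) \, d\lambda$, but the crude inequality above already suffices for integrability.
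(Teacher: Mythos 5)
Your proof is correct, and it takes a genuinely more elementary route than the paper. The paper evaluates $a_{\alpha}$ in closed form via the beta function, $a_{\alpha}^{-2}=2\,\Gamma(\tfrac12-\alpha)\Gamma(\tfrac12)/\Gamma(1-\alpha)$, and then invokes $\Gamma(z)=1/z+O(1)$ to conclude the sharp asymptotic $w_{\varepsilon}(\alpha)\sim c\,(1/2-\alpha)^{-(1-\varepsilon)}$ at the single singular point $\alpha=1/2$. You instead derive only the one-sided bound $a_{\alpha}\geqslant c\sqrt{1/2-\alpha}$ from the elementary inequality $\sin x\geqslant(2/\pi)x$, plus a dominated-convergence argument at $\alpha=0$ and continuity on compact subintervals; this yields $w_{\varepsilon}(\alpha)\leqslant C'(1/2-\alpha)^{-(1-\varepsilon)}$, which is all the lemma requires. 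The trade-off: the paper's computation gives the exact rate of degeneration of $a_{\alpha}$ (and the explicit constant $a_{\alpha}=\sqrt{\Gamma(1-\alpha)/(2\sqrt{\pi}\,\Gamma(\tfrac12-\alpha))}$ quoted in the statement of the construction), while your argument avoids the special-function identity entirely at the cost of losing the matching lower bound on $w_{\varepsilon}$ --- which is harmless here, since integrability is a purely one-sided question. Your treatment of the endpoint $\alpha=0$ (showing $\alpha/a_{\alpha}\to 0$) is slightly more explicit than the paper's, which simply asserts that $\alpha=1/2$ is the only singular point.
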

\begin{proof}
  Now we need the exact expression of $a_{\alpha}$:
\begin{align*}
  \frac{1}{a_{\alpha}^{2}}&=\int_{0}^{2\pi}\frac{1}{(\sin\frac{\lambda}{2})^{2\alpha}}d\lambda=4\int_{0}^{\pi / 2}
                            \frac{1}{\sin^{2\alpha}x}dx\\
  &=2\mathrm{B}\Bigl(\frac{1-2\alpha}{2},\frac{1}{2}\Bigr)=2\frac{\Gamma(\frac{1}{2}-\alpha)\Gamma(\frac{1}{2})}{\Gamma(1-\alpha)}
\end{align*}
where $B$ is the beta function (\hspace{-0.1pt}\cite{GR}, p.\thinspace 610, 3.621). Hence
\[
a_{\alpha}=\sqrt{\frac{\Gamma(1-\alpha)}{2\sqrt{\pi}\Gamma(\frac{1}{2}-\alpha)}}.
\]
It is well-known that $\Gamma(z)= 1/z+O(1)$ as $z\rightarrow0$.
Thus $w_{\varepsilon}(\alpha)$ has the only one singular point on
$[0, 1/2]$ at $\alpha=1/2$ and obviously 
\[
w_{\varepsilon}(\alpha)\sim\frac{c}{(1/2-\alpha)^{1-\epsilon}}
\]
as $\alpha\rightarrow  1/2$ for some constant $c$. So $w_{\varepsilon}(\alpha)$
is absolutely integrable on $[0, 1/2]$.
\end{proof}

Finally, we will construct required initial condition by its Fourier
transform which are 
 defined by the following formulas: 
\[
\tilde{Q}^{(\varepsilon)}(\lambda)=0,\quad\tilde{P}^{(\varepsilon)}(\lambda)=\int_{0}^{1/2}w_{\varepsilon}(\alpha)P^{(\alpha)}(\lambda)d\alpha.
\]
The latter integral we understand in the following sense: 
\begin{equation}
\tilde{P}^{(\varepsilon)}(\lambda)=\lim_{\delta\rightarrow0+}\int_{0}^{1/2-\delta}w_{\varepsilon}(\alpha)P^{(\alpha)}(\lambda)d\alpha.\label{limdDef}
\end{equation}
Since $L_{2}$-norm of $P^{(\alpha)}(\cdot)$ equals to one and $w_{\varepsilon}(\alpha)$
is absolutely integrable, the limit in (\ref{limdDef}) exists and,
moreover, one has the inequality: 
\[
||\tilde{P}^{(\varepsilon)}(\cdot)||_{L_{2}([0,2\pi])}\leqslant\int_{0}^{1/2}w_{\varepsilon}(\alpha)d\alpha.
\]
Thus the corresponding to $\tilde{Q}^{(\varepsilon)}(\lambda),\tilde{P}^{(\varepsilon)}(\lambda)$
initial conditions: 
\begin{equation}
\tilde{q}_{k}^{(\varepsilon)}(0)=0,\quad\tilde{p}_{k}^{(\varepsilon)}(0)=\frac{1}{2\pi}\int_{0}^{2\pi}\tilde{P}^{(\varepsilon)}(\lambda)e^{-ik\lambda}d\lambda\label{epsInitCond}
\end{equation}
lie in $l_{2}(\mathbb{Z})$. Denote $\tilde{q}_{k}^{(\varepsilon)}(t),\tilde{p}_{k}^{(\varepsilon)}(t),\ k\in\mathbb{Z}$
the solution of (\ref{mainEqForQ}) with initial condition (\ref{epsInitCond}).
Due to the lemmas \ref{qQPformulaLemma} and \ref{qalphatlemma} and
Fubini\tire Tonelli theorem we have: 
\begin{align*}
  \tilde{q}_{0}^{(\varepsilon)}(t)&=\frac{1}{2\pi}\int_{0}^{2\pi}\frac{\sin(t\sin\frac{\lambda}{2})}{\sin\frac{\lambda}{2}}\tilde{P}^{(\varepsilon)}(\lambda)d\lambda\\
  &=\int_{0}^{1/2}w_{\varepsilon}(\alpha)\frac{1}{2\pi}\int_{0}^{2\pi}\frac{\sin(t\sin\frac{\lambda}{2})}{\sin\frac{\lambda}{2}}P^{(\alpha)}(\lambda)d\lambda d\alpha
\\
                                  &=\int_{0}^{1/2} \! w_{\varepsilon}(\alpha)q_{0}^{(\alpha)}(t)d\alpha= \! \int_{0}^{1/2} \! \! \frac{1}{\varphi(\alpha)}\,
                                    \frac{1}{(1/2-\alpha)^{1/2-\varepsilon}}\bigl(\varphi(\alpha)t^{\alpha} \! + \! R(\alpha,t)\bigr)d\alpha
\\
                                  &=\int_{0}^{1/2}\frac{t^{\alpha}}{(1/2-\alpha)^{1/2-\varepsilon}}d\alpha+\int_{0}^{1/2}\frac{R(\alpha,t)}{\varphi(\alpha)}\
                                    \frac{1}{(1/2-\alpha)^{1/2-\varepsilon}}d\alpha .
\end{align*}
The remainder term in the latter formula can be easily estimated:
\[
\biggl| \int_{0}^{1/2} \frac{R(\alpha,t)}{\varphi(\alpha)}\ \frac{1}{( 1/2 -\alpha)^{1/2-\varepsilon}}d\alpha\biggr| \leqslant c_{1}+\frac{c_{2}}{t}
\]
for some nonnegative constants $c_{1},c_{2}$. Let us find the value
of the first integral. Put $\delta=\varepsilon+1/2$: 
\begin{align*}
  \int_{0}^{1/2} \frac{t^{\alpha}}{(1/2-\alpha)^{1-\delta}}d\alpha&=\sqrt{t}\int_{0}^{1/2} t^{-u}u^{\delta-1}du=\sqrt{t}\int_{0}^{1/2}e^{-u\ln t}u^{\delta-1}du\\
  &=\sqrt{t}(\ln t)^{-\delta}\int_{0}^{(\ln t)/2} e^{-y}y^{\delta-1}dy.
\end{align*}
Thus we have proved Theorem \ref{uniBoundTh} item 2.c for the case $\omega_1 =\frac{1}{2}$. 

 Now suppose that $\omega_1$ is an arbitrary positive number. 
Consider solution with initial condition $\tilde{q}_{k}^{(\varepsilon)}(0), \tilde{p}_{k}^{(\varepsilon)}(0)$ defined in (\ref{epsInitCond}). 
Denote it by $\tilde{q}_{k}^{\omega_1,(\varepsilon)}(t)$. From (\ref{qQPformula}) it is easy to see that 
$$
\tilde{q}_{k}^{\omega_1,(\varepsilon)}(t) = \frac{1}{2\omega_1}\tilde{q}_{k}^{1/2,(\varepsilon)}(2\omega_1 t) = \frac{1}{2\omega_1}\tilde{q}_{k}^{(\varepsilon)}(2\omega_1 t).
$$
Hence we obtain limiting equalities:
$$
\lim_{t\rightarrow\infty}\frac{\tilde{q}_{k}^{\omega_1,(\varepsilon)}(t)}{\sqrt{t}}\ln^{\delta}t = 
\frac{1}{2\omega_1} \lim_{t\rightarrow\infty}\frac{\tilde{q}_{k}^{(\varepsilon)}(2\omega_1 t)}{\sqrt{t}}\ln^{\delta}t  = \frac{1}{\sqrt{2\omega_1}} \Gamma(\delta).
$$
This completes the proof of Theorem \ref{uniBoundTh}.

\subsection{Proof of Theorem \ref{asympbehgz}}

We will use Lemma \ref{qQPformulaLemma}. The first part  of Theorem \ref{asympbehgz}
can be easily derived by integrating by parts $n$ times the following
integral: 
\[
\int_{0}^{2\pi}f(\lambda)e^{-ik\lambda}d\lambda
\]
where $f(\lambda)$ is a corresponding $C^{n}$ smooth $2\pi$-periodic
function.

Next we will need the following lemma. \begin{lemma} \label{egaymp}
Consider the integral: 
\[
E[g](t)=\frac{1}{2\pi}\int_{0}^{2\pi}g(\lambda)\exp(it\omega(\lambda))d\lambda,\quad \omega(\lambda)=\sqrt{\omega_{0}^{2}+2\omega_{1}^{2}(1-\cos\lambda)}
\]
where $g(\lambda)\in C^{n}(\mathbb{R}),\ n\geqslant2$ is a complex
valued $2\pi$-periodic function. Then,  as $t \rightarrow\infty,$
\[
  E[g](t)\asymp\frac{1}{\sqrt{t}}\left(c_{1}g(0)\exp\Bigl(i\Bigl(t\omega_{0}+\frac{\pi}{4}\Bigr)\Bigr)+c_{2}g(\pi)\exp\Bigl(i\Bigl(t\omega'_{0}-\frac{\pi}{4}\Bigr)
  \Bigr) \right),
\]
where 
\[
c_{1}=\frac{1}{\omega_{1}}\sqrt{\frac{\omega_{0}}{2\pi}},\quad c_{2}=\frac{1}{\omega_{1}}\sqrt{\frac{\omega'_{0}}{2\pi}}
\]
and $\omega_{0}'$ is defined in Theorem \ref{asympbehgz}.
\end{lemma}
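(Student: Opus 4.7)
This is a classical oscillatory integral that calls for the method of stationary phase on the circle. I begin by locating and classifying the critical points of the phase $\omega(\lambda)$ on $[0,2\pi]$. Since $\omega_0>0$, the function $\omega(\lambda)\geq\omega_0>0$ is smooth on $\mathbb{R}$, and
\[
\omega'(\lambda)=\frac{\omega_1^2\sin\lambda}{\omega(\lambda)},
\]
which vanishes on the circle exactly at $\lambda=0$ and $\lambda=\pi$. A direct computation gives $\omega(0)=\omega_0$, $\omega(\pi)=\omega'_0$, and
\[
\omega''(0)=\frac{\omega_1^2}{\omega_0}>0,\qquad \omega''(\pi)=-\frac{\omega_1^2}{\omega'_0}<0,
\]
so both critical points are non-degenerate (a minimum and a maximum, respectively).

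Next, I fix a small $\delta>0$ and split $E[g](t)=E_0(t)+E_\pi(t)+E_{\mathrm{reg}}(t)$, where $E_0$ and $E_\pi$ are the integrals over the arcs $[-\delta,\delta]$ and $[\pi-\delta,\pi+\delta]$ (using $2\pi$-periodicity to center the first around $0$), and $E_{\mathrm{reg}}$ is the integral over the remaining arcs. On the support of $E_{\mathrm{reg}}$ we have $|\omega'(\lambda)|\geq c>0$, so writing $e^{it\omega(\lambda)}=(it\omega'(\lambda))^{-1}\frac{d}{d\lambda}e^{it\omega(\lambda)}$ and integrating by parts twice—which is legitimate because $g,\omega\in C^2$—yields $E_{\mathrm{reg}}(t)=O(t^{-2})$, well inside the required $\bar{\bar{o}}(t^{-1/2})$ error. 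For $E_0$ and $E_\pi$ I invoke the standard stationary phase formula: for a smooth phase with a non-degenerate interior critical point $\lambda_0$,
\[
\int_{\lambda_0-\delta}^{\lambda_0+\delta} g(\lambda)e^{it\omega(\lambda)}\,d\lambda
=\sqrt{\frac{2\pi}{t\,|\omega''(\lambda_0)|}}\,g(\lambda_0)\,e^{it\omega(\lambda_0)+i\sigma\pi/4}+\bar{\bar{o}}(t^{-1/2}),
\]
with $\sigma=\mathrm{sign}\,\omega''(\lambda_0)$. Substituting the second derivatives computed above at $\lambda_0=0$ (with $\sigma=+1$) and $\lambda_0=\pi$ (with $\sigma=-1$), then pulling out the overall $1/(2\pi)$ prefactor, produces exactly the coefficients $c_1,c_2$ and the phase shifts $\pm\pi/4$ stated in the lemma, with no remaining terms.

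The main obstacle is not the algebraic bookkeeping of coefficients (which is routine once $\omega''(0)$ and $\omega''(\pi)$ are in hand) but justifying the $\bar{\bar{o}}(t^{-1/2})$ remainder in the local stationary phase contributions. The standard derivation proceeds by changing variables to flatten the phase to $\pm u^2$ via a local $C^1$ diffeomorphism (whose existence uses $\omega''(\lambda_0)\neq 0$), truncating smoothly, and comparing the resulting Fresnel integral $\int e^{\pm itu^2}\,du$ to its value on $\mathbb{R}$; the error comes from the Taylor remainder of $g$ at $\lambda_0$, which is controlled precisely by the hypothesis $g\in C^n$ with $n\geq 2$. Since the paper later applies this lemma to $g=Q$ and $g=P/\omega$, one should also note that $\omega$ is smooth and bounded away from zero, so $P/\omega\in C^n$ whenever $P\in C^n$, which makes the lemma immediately applicable in the proof of Theorem~\ref{asympbehgz}.
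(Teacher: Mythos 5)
Your proposal is correct and follows essentially the same route as the paper: the authors also apply the stationary phase method, use the $2\pi$-periodicity of $g$ to shift the integration interval so that $0$ and $\pi$ become interior non-degenerate critical points, compute the same second derivatives $\omega''(0)=\omega_1^2/\omega_0$ and $|\omega''(\pi)|=\omega_1^2/\omega_0'$, and then cite the standard stationary phase formula (Erd\'elyi) rather than re-deriving it as you sketch. Your version is, if anything, slightly more careful about the sign of $\omega''(\pi)$ and about justifying the remainder, but the substance is identical.
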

\begin{proof} We will apply stationary phase method (see \cite{Erdelyi,Fedoruk}).
Let us find critical points of the phase function, i.e.\ zeros of the
$\frac{d}{d\lambda}\omega(\lambda)$: 
\[
\omega'(\lambda)=\frac{d}{d\lambda}\omega(\lambda)=\frac{\omega_{1}^{2}\sin\lambda}{\omega(\lambda)}=0.
\]
So the critical points lying on the interval $[0,2\pi]$ are only
$0,\pi,2\pi$. Zero and $2\pi$ are the boundary points, but as $g(\lambda)$
is a $2\pi$-periodic function we can replace the interval $[0,2\pi]$
from the definition of $E[g](t)$ by the interval $[- \pi / 2, 3\pi / 2]$: 
\[
E[g](t)=\frac{1}{2\pi}\int_{-\pi / 2}^{3\pi / 2}g(\lambda)\exp(it\omega(\lambda))d\lambda.
\]
Thus the critical points of $\omega(\lambda)$ lying on $[-\pi / 2, 3\pi / 2]$
are only $0,\pi$ and we can apply stationary phase method for the
internal critical points. Let us find the second derivative of the
$\omega(\lambda)$: 
\[
\omega''(\lambda)=\frac{\omega_{1}^{2}\cos\lambda}{\omega(\lambda)}-\frac{\omega_{1}^{4}\sin^{2}\lambda}{\omega^{3}(\lambda)}.
\]
Consequently $\omega''(0)=\omega_{1}^{2} / \omega_{0},$ $\omega''(\pi)=\omega_{1}^{2} / \omega'_{0}$.
To prove the lemma it remains to apply the formula (2) from \cite{Erdelyi},
p.\thinspace 51 (or \cite{Fedoruk}, p.\thinspace 163).
\end{proof}

Further we will use the equality (\ref{qQPformula}). For $Q_{k}$
we have: 
\begin{align*}
Q_{k}(t)&=\frac{1}{2\pi}\int_{0}^{2\pi}Q(\lambda)\cos(t\omega(\lambda))e^{-ik\lambda}d\lambda=\frac{1}{2}(E[Qe^{-ik\lambda}](t)+\overline{E[\overline{Qe^{-ik\lambda}}](t)}),
\\
  P_{k}(t)&=\frac{1}{2\pi}\int_{0}^{2\pi}P(\lambda)\frac{\sin(t\omega(\lambda))}{\omega(\lambda)}e^{-ik\lambda}d\lambda=\frac{1}{2i}\left(E[g](t)-\overline{E[\overline{g}](t)}\right),
\\
g&=\frac{Pe^{-ik\lambda}}{\omega(\lambda)}
\end{align*}
where $\overline{z}$ denotes the complex conjugate number of $z\in\mathbb{C}$.
Applying Lemma \ref{egaymp} to these expressions one can easily obtain
the part 2 of Theorem \ref{asympbehgz}.

Let us prove part 3 of Theorem \ref{asympbehgz}. We need the
following lemma.
\begin{lemma} \label{FgbetakL}
  Consider the integral
\[
F[g](\beta,k)=\frac{1}{2\pi}\int_{0}^{2\pi}g(\lambda)e^{ik(\lambda+\beta\omega(\lambda))}d\lambda
\]
where $g$ satisfies the conditions of Lemma \ref{egaymp}, $\beta>0,\ k\in\mathbb{Z}$.
Define the constant: 
\[
\gamma(\beta)=\beta^{2}\omega_{1}^{2}-1-\beta\omega_{0}.
\]
The following assertions hold: \begin{enumerate} 

\item if $\gamma(\beta)>0$ then as $k\rightarrow\infty$: 
\begin{equation}
F[g](\beta,k)\asymp\frac{1}{\sqrt{|k|}}(c_{+}g(\mu_{+})e^{i\omega_{+}(k)}+c_{-}g(\mu_{-})e^{i\omega_{-}(k)})\label{fbetakasymp}
\end{equation}
where 
\begin{align*}
  \omega_{\pm}(k)&=k(\mu_{\pm}+\beta\omega(\mu_{\pm}))\pm\frac{\pi}{4}\mathrm{sign}(k),\\
  c_{\pm}&=\sqrt{\frac{\beta\omega(\mu_{\pm})}{2\pi\Delta}},\quad\mu_{\pm}=-\arccos\frac{1}{\beta^{2}\omega_{1}^{2}}(1\pm\Delta),
\\
\Delta&=\sqrt{(\beta^{2}\omega_{1}^{2}-1)^{2}-\beta^{2}\omega_{0}^{2}};
\end{align*}

\item if $\gamma(\beta)=0$ and $n\geqslant3$ then $F[g](\beta,k)=O(k^{-3})$; 

\item if $\gamma(\beta)<0$ then $F[g](\beta,k)=O(k^{-n})$.
\end{enumerate}
\end{lemma}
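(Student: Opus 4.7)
The integral $F[g](\beta,k)$ has the form $\int g(\lambda) e^{ik\phi(\lambda)}d\lambda$ with phase $\phi(\lambda)=\lambda+\beta\omega(\lambda)$, so the natural tool is the method of stationary phase in $k$. The first step is to locate the critical points of $\phi$. Differentiating and using the identity $\omega\omega'=\omega_{1}^{2}\sin\lambda$ (obtained by differentiating $\omega^{2}=\omega_{0}^{2}+2\omega_{1}^{2}(1-\cos\lambda)$), the equation $\phi'(\mu)=0$ becomes $\omega(\mu)=-\beta\omega_{1}^{2}\sin\mu$, which in particular forces $\sin\mu<0$. Squaring yields a quadratic in $\cos\mu$ whose roots are $\cos\mu_{\pm}=(1\pm\Delta)/(\beta^{2}\omega_{1}^{2})$ with $\Delta$ as in the lemma. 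A direct check shows these roots lie in $(-1,1)$ and are real and distinct precisely when $\gamma(\beta)>0$, coalesce when $\gamma(\beta)=0$, and drop into the complex plane when $\gamma(\beta)<0$, which matches the three cases of the statement.

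\textbf{Case $\gamma(\beta)>0$.} Using $2\pi$-periodicity, shift the interval of integration (as in the proof of Lemma~\ref{egaymp}) so that both $\mu_{\pm}$ sit strictly in its interior. To apply stationary phase one needs $\phi''(\mu_{\pm})$; differentiating $\omega\omega'=\omega_{1}^{2}\sin\lambda$ once more gives $(\omega')^{2}+\omega\omega''=\omega_{1}^{2}\cos\lambda$, and since $\omega'(\mu_{\pm})=-1/\beta$ at a critical point, substitution of $\cos\mu_{\pm}=(1\pm\Delta)/(\beta^{2}\omega_{1}^{2})$ collapses everything to the clean identity $\phi''(\mu_{\pm})=\pm\Delta/(\beta\omega(\mu_{\pm}))$. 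Thus the two critical points are non-degenerate with opposite Morse indices. Applying the standard stationary-phase formula from \cite{Erdelyi,Fedoruk} at each of them and summing yields (\ref{fbetakasymp}): the factor $\sqrt{2\pi/(|k|\,|\phi''(\mu_{\pm})|)}$, after division by $2\pi$, reproduces $|k|^{-1/2}c_{\pm}$ with $c_{\pm}=\sqrt{\beta\omega(\mu_{\pm})/(2\pi\Delta)}$; the phase is $k\phi(\mu_{\pm})=k(\mu_{\pm}+\beta\omega(\mu_{\pm}))$; and the $\pm(\pi/4)\mathrm{sign}(k)$ correction comes from $\mathrm{sign}(k\phi''(\mu_{\pm}))$. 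The remainder is $\bar{\bar o}(|k|^{-1/2})$ thanks to $g\in C^{n}$ with $n\geq 2$.

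\textbf{Cases $\gamma(\beta)\leq 0$.} When $\gamma(\beta)<0$ the discriminant for $\cos\mu$ is negative, so $\phi'$ has no real zero on $[0,2\pi]$ and is in fact bounded away from zero; writing $e^{ik\phi}=(ik\phi')^{-1}\frac{d}{d\lambda}e^{ik\phi}$ and iterating integration by parts $n$ times (each step costing one derivative of $g/\phi'$, all of which are controlled since $g\in C^{n}$ and $\omega$ is smooth with $\omega\geq\omega_{0}+\text{const}>0$ when $\omega_{0}>0$) delivers the bound $O(k^{-n})$. The delicate case is $\gamma(\beta)=0$, where the two critical points coalesce into a single degenerate point $\mu_{0}$ with $\phi''(\mu_{0})=0$ but $\phi'''(\mu_{0})\neq 0$. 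The plan is to split the integral into a piece localized around $\mu_{0}$ on a window whose width is a suitable negative power of $k$, estimate the localized piece using Taylor expansion of $\phi$ to order three and of $g$ to order two (this is where the hypothesis $n\geq 3$ enters), and control the complement by integration by parts against the now-nonvanishing $\phi'$. This coalescence analysis is the main obstacle in the proof: the geometry of the stationary points is fundamentally different from the two previous cases, and pinning down the precise rate claimed by the lemma requires a careful bookkeeping of the Taylor remainders on the localized window together with the boundary terms produced by the integration by parts on its complement.
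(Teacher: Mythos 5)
Your overall strategy---stationary phase in $k$ for the phase $h(\lambda)=\lambda+\beta\omega(\lambda)$---is exactly the paper's, and your treatment of the case $\gamma(\beta)>0$ reproduces the paper's computation in every detail: the identity $h''(\mu_{\pm})=\pm\Delta/(\beta\omega(\mu_{\pm}))$ obtained from $\omega'(\mu_{\pm})=-1/\beta$, the constants $c_{\pm}$, and the $\pm\frac{\pi}{4}\mathrm{sign}(k)$ phase corrections. However, there is a concrete error in your case $\gamma(\beta)<0$. You claim that the discriminant of the quadratic in $x=\cos\lambda$ is negative, so that $h'$ has no real zero. In fact the discriminant equals $4\omega_{1}^{4}\,\gamma(\beta)\bigl(\gamma(\beta)+2\beta\omega_{0}\bigr)$, which is negative only when $\gamma(\beta)+2\beta\omega_{0}>0$. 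In the regime $\gamma(\beta)+2\beta\omega_{0}\leqslant 0$ (for instance $\beta$ small, where $\gamma(\beta)\approx -1$ and, when $\omega_0=0$, always) both factors are nonpositive, the discriminant is nonnegative and the roots $x_{\pm}$ are real. The reason there is still no stationary point is different: one must show that then $|x_{\pm}|\geqslant\frac{1}{\beta^{2}\omega_{1}^{2}}(\beta^{2}\omega_{1}^{2}+\beta\omega_{0})>1$, so that $x_{\pm}$ is not the cosine of any real angle. The paper carries out exactly this two-case analysis; as written, your argument fails for small $\beta$, and the subsequent integration by parts (which needs $h'$ bounded away from zero) is not justified there.

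A secondary issue is that your case $\gamma(\beta)=0$ is only a plan, explicitly left uncompleted (``the main obstacle in the proof''). The paper disposes of this case in one line by identifying $\lambda_{+}=\lambda_{-}$ as a degenerate stationary point with $h''(\lambda_{+})=0$ and invoking the standard asymptotic estimate for degenerate stationary points from Erdelyi (p.\thinspace 52) and Fedoryuk (p.\thinspace 163). If you insist on a self-contained localization argument you must also verify $h'''\neq 0$ at the coalesced point (you assert it without proof) and check that the bookkeeping actually delivers the rate claimed in item~2 of the lemma; alternatively, simply cite the standard result as the paper does.
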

\begin{proof} We will again use the stationary phase
method. Consider the phase function: 
\[
h(\lambda)=\lambda+\beta\omega(\lambda).
\]
 Let us find the critical points: 
\begin{equation}
h'(\lambda)=1+\beta\omega_{1}^{2}\frac{\sin\lambda}{\sqrt{\omega_{0}^{2}+2\omega_{1}^{2}(1-\cos\lambda)}}=0.\label{hdereq}
\end{equation}
From this equation we have: 
\[
\omega_{0}^{2}+2\omega_{1}^{2}(1-\cos\lambda)=\beta^{2}\omega_{1}^{4}(1-\cos^{2}\lambda).
\]
Making the substitution $x=\cos\lambda$ rewrite the last equation:
\begin{equation}
\beta^{2}\omega_{1}^{4}x^{2}-2\omega_{1}^{2}x+(\omega_{0}^{2}+2\omega_{1}^{2}-\beta^{2}\omega_{1}^{4})=0.\label{criticalPointsEqX}
\end{equation}
The discriminant of this equation is: 
\begin{align*}
  D&=4\omega_{1}^{4}(1-\beta^{2}(\omega_{0}^{2}+2\omega_{1}^{2}-\beta^{2}\omega_{1}^{4}))
     =4\omega_{1}^{4}\bigl((\beta^{2}\omega_{1}^{2}-1)^{2}-\beta^{2}\omega_{0}^{2}\bigr)\\
  &=4\omega_{1}^{4}(\beta^{2}\omega_{1}^{2}-1-\beta\omega_{0})(\beta^{2}\omega_{1}^{2}-1+\beta\omega_{0}).
\end{align*}
\par Suppose that $\gamma(\beta)\geqslant0$. In that case the roots
of the equation (\ref{criticalPointsEqX}) are: 
\[
  x_{\pm}=\frac{1}{\beta^{2}\omega_{1}^{2}}\Bigl(1\pm\sqrt{(\beta^{2}\omega_{1}^{2}-1)^{2}-\beta^{2}\omega_{0}^{2}}\,\Bigr)
  =\frac{1}{\beta^{2}\omega_{1}^{2}}(1\pm\Delta).
\]
From the condition $\gamma(\beta)\geqslant0$ it follows that $\beta^{2}\omega_{1}^{2}-1>0$
and thus: 
\[
|x_{\pm}|<\frac{1}{\beta^{2}\omega_{1}^{2}}(1+|\beta^{2}\omega_{1}^{2}-1|)=1.
\]
Consequently the phase function has only two critical points $\lambda_{\pm}$
on $[0,2\pi]$: $\cos(\lambda_{\pm})=x_{\pm}$ with the additional
condition $\sin(\lambda_{\pm})<0$ which follows from (\ref{hdereq}):
\[
\lambda_{\pm}=2\pi-\arccos\frac{1}{\beta^{2}\omega_{1}^{2}}(1\pm\Delta)=2\pi+\mu_{\pm}.
\]
Obviously $\lambda_{+}\ne\lambda_{-}$ iff  $\gamma(\beta)>0$.
Moreover, $\lambda_{\pm}$ are internal points, i.e.\ $\lambda_{\pm}\in(0,2\pi)$.\par To
apply the stationary phase method we should find the signs of $h''(\lambda_{\pm})$.
Rewrite the derivative of the phase function: 
\[
h'(\lambda)=1+\beta\omega_{1}^{2}\frac{\sin\lambda}{\omega(\lambda)}.
\]
So at $\lambda_{\pm}$ we have: 
\[
\omega(\lambda_{\pm})=-\beta\omega_{1}^{2}\sin\lambda_{\pm}.
\]
Further we obtain 
\[
h''(\lambda)=\beta\omega_{1}^{2}\frac{\cos\lambda}{\omega(\lambda)}-\beta\omega_{1}^{4}\frac{\sin^{2}\lambda}{\omega^{3}(\lambda)}.
\]
Thus 
\begin{align*}
h''(\lambda_{\pm})&=-\frac{\cos\lambda_{\pm}}{\sin\lambda_{\pm}}+\frac{1}{\beta\omega_{1}^{2}}\frac{1}{\sin\lambda_{\pm}}=-\frac{1}{\sin\lambda_{\pm}}\left(\cos\lambda_{\pm}-\frac{1}{\beta^{2}\omega_{1}^{2}}\right)
\\
&=\pm\Bigl(-\frac{1}{\sin\lambda_{\pm}}\Bigr)\frac{1}{\beta^{2}\omega_{1}^{2}}\sqrt{(\beta^{2}\omega_{1}^{2}-1)^{2}-\beta^{2}\omega_{0}^{2}}=\pm\frac{1}{\beta\omega(\lambda_{\pm})}\Delta.
\end{align*}
Since $\omega(\lambda)>0$ we have $\mathrm{sign}(h''(\lambda_{\pm}))=\pm1$
under the condition $\gamma(\beta)>0$. Hence applying (2) from \cite{Erdelyi},
p.\thinspace 51 we get: 
\[
F[g](\beta,k)\asymp\frac{1}{\sqrt{|k|}}\bigl(c_{+}g(\lambda_{+})e^{i\tilde{\omega}_{+}(k)}+c_{-}g(\lambda_{-})e^{i\tilde{\omega}_{-}(k)}\bigr)
\]
where $c_{\pm}$ are defined in (\ref{fbetakasymp}) (we have used
the fact that $\omega(\lambda)=\omega(-\lambda)$). Note that: 
\begin{align*}
\tilde{\omega}_{+}(k)&=k(\lambda_{\pm}+\beta\omega(\lambda_{\pm}))\pm\frac{\pi}{4}\mathrm{sign}(k)=2\pi k+\omega_{\pm}(k),
\\
g(\lambda_{\pm})&=g(\mu_{\pm})
\end{align*}
where $\omega_{\pm}(k)$ is defined in (\ref{fbetakasymp}). Since
$k\in\mathbb{Z}$ we have $e^{i\tilde{\omega}_{\pm}(k)}=e^{i\omega_{\pm}(k)}$.
This completes the prove in the case $\gamma(\beta)>0$.\par If $\gamma(\beta)=0$
then $\lambda_{+}=\lambda_{-}$ and $h''(\lambda_{+})=0$, i.e.\ $\lambda_{+}$
is a degenerate critical point and so $F[g](\beta,k)=O(k^{-3})$ due
to \cite{Erdelyi}, p.\thinspace 52 (or \cite{Fedoruk}, p.\thinspace 163).

\par
Consider
the case $\gamma(\beta)<0$. We will prove that $h(\lambda)$ has
no critical points on $[0,2\pi]$. Rewrite the discriminant using
$\gamma$: 
\[
D=4\omega_{1}^{4}\gamma(\gamma+2\beta\omega_{0}).
\]
If $\gamma<0$ and $\gamma+2\beta\omega_{0}>0$ then $D<0$ and the
equation (\ref{criticalPointsEqX}) has no real roots. Suppose that
$\gamma+2\beta\omega_{0}\leqslant0$. In that case 
\[
\Delta^{2}=\frac{D}{4\omega_{1}^{4}}=((\gamma+2\beta\omega_{0})^{2}-2\beta\omega_{0}(\gamma+2\beta\omega_{0}))\geqslant(\gamma+2\beta\omega_{0})^{2}.
\]
For the roots $x_{\pm}$ we have the following estimates: 
\begin{align*}
  |x_{\pm}| &\geqslant\frac{1}{\beta^{2}\omega_{1}^{2}}|1-\Delta|\geqslant\frac{1}{\beta^{2}\omega_{1}^{2}}|1-|\gamma+2\beta\omega_{0}||=\frac{1}{\beta^{2}\omega_{1}^{2}}|1+\gamma+2\beta\omega_{0}|\\
  &=\frac{1}{\beta^{2}\omega_{1}^{2}}(\beta^{2}\omega_{1}^{2}+\beta\omega_{0})>1.
\end{align*}
Thus we have proved that $h(\lambda)$ has no critical points on $[0,2\pi]$.
Consequently, $F[g](\beta,k)=O(k^{-n})$. This completes the proof.
\end{proof}

Let us prove the remainder part of Theorem \ref{asympbehgz}. From
Lemma \ref{qQPformulaLemma} we have 
\begin{align*}
  Q_{k}(t)&=\frac{1}{2\pi}\int_{0}^{2\pi}Q(\lambda)\cos(t\omega(\lambda))e^{-ik\lambda}d\lambda\\
  &=\frac{1}{4\pi}\int_{0}^{2\pi}Q(\lambda)e^{it\omega(\lambda)-ik\lambda}d\lambda+\frac{1}{4\pi}\int_{0}^{2\pi}Q(\lambda)e^{-it\omega(\lambda)-ik\lambda}d\lambda.
\end{align*}
Put $t=\beta k$ and rewrite $Q_{k}(t)$: 
\begin{align*}
  Q_{k}(t)&=\frac{1}{4\pi}\int_{0}^{2\pi}Q(\lambda)e^{it\omega(\lambda)-ik\lambda}d\lambda+\frac{1}{2}F[Q](\beta,-k)\\
  &=\frac{1}{2}(F[Q^{*}](\beta,k)+F[Q](\beta,-k)),
\end{align*}
where we have used the following notation: 
\[
f^{*}(\lambda)=f(2\pi-\lambda)=f(-\lambda).
\]
Note that 
\[
\omega_{\pm}(-k)=-\omega_{\pm}(k)
\]
for all $k\in\mathbb{Z}$. So in the case $\gamma(\beta)>0$, as $k\rightarrow\infty,$ $t=\beta k,$ $\beta>0$,
due to Lemma \ref{FgbetakL} we get: 
\begin{align*}
  Q_{k}(t)&\asymp\frac{1}{\sqrt{|k|}}\bigl(c_{+}(Q(\mu_{+})e^{i\omega_{+}(k)}+Q(-\mu_{+})e^{-i\omega_{+}(k)})\\
  &\quad {} +c_{-}(Q(\mu_{-})e^{i\omega_{-}(k)}+Q(-\mu_{-})e^{-i\omega_{-}(k)})\bigr)
\\
&=\frac{1}{\sqrt{|k|}}\mathcal{F}_{k}^{+}[Q],
\end{align*}
where $\mathcal{F}_{k}^{\pm}$ are defined in Theorem \ref{asympbehgz}.
Similarly one can obtain the expression for $P_{k}(t)$: 
\[
P_{k}(t)=\frac{1}{2i}\bigl(F[g^{*}](\beta,k)-F[g](\beta,-k)\bigr),\quad g=\frac{P(\lambda)}{\omega(\lambda)}.
\]
If $\gamma(\beta)>0$ it is easy to see from the latter formula that
\[
P_{k}(t)\asymp-\frac{i}{\sqrt{|k|}}\mathcal{F}^{-}[g].
\]
This completes the proof of Theorem \ref{asympbehgz}.

\subsection{Proof of Theorem \ref{asympbehezero}}

We will use Lemma \ref{qQPformulaLemma}. The fact that $Q_{k}(t)=O(k^{-n})$
as $t$ is fixed easily follows from the integrating by parts the
corresponding integral $n$ times. Let us consider the term $P_{k}(t)$:
\[
P_{k}(t)=\frac{1}{2\pi}\int_{0}^{2\pi}P(\lambda)\frac{\sin(t\omega(\lambda))}{\omega(\lambda)}e^{-ik\lambda}d\lambda.
\]
In the case $\omega_{0}=0$ we have 
\[
\omega(\lambda)=2\omega_{1}\sin\frac{\lambda}{2}.
\]
Thus the integrand contains two points where the denominator equals
to zero. Since 
\[
\frac{\sin(t\omega(\lambda))}{\omega(\lambda)}=\sum_{k=0}^{\infty}\frac{t^{2k+1}\omega^{2k}(\lambda)}{(2k+1)!},
\]
$[\sin(t\omega(\lambda))] / \omega(\lambda)$ is $C^{\infty}(\mathbb{R})$
smooth function w.r.t.\ $\lambda$ with period $2\pi$. Thus $P_{k}(t)=O(k^{-n})$
as $t$ is fixed. Hence the part 1 of Theorem \ref{asympbehezero}
is proved.

Next we will prove the remainder part of Theorem \ref{asympbehezero}.
We need the following lemma.
\begin{lemma} For all fixed $k\in\mathbb{Z}$
the folowing limit holds 
\begin{equation}
\lim_{t\rightarrow\infty}P_{k}(t)=\frac{P(0)}{2\omega_{1}}. \label{Pklim}
\end{equation}
\end{lemma}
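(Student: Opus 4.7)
The plan is to reduce the limit to the classical Dirichlet integral $\int_{-\infty}^{\infty}\sin(y)/y\,dy=\pi$ via two successive changes of variables. Since $\sin(t\omega)/\omega$ is an even function of $\omega$, it depends only on $\omega(\lambda)^{2}=2\omega_{1}^{2}(1-\cos\lambda)$, which is $2\pi$-periodic; the whole integrand in $P_{k}(t)$ is therefore $2\pi$-periodic and I may translate the integration interval to $[-\pi,\pi]$, placing the only zero of $\omega$ at the interior point $\lambda=0$. On $[-\pi,\pi]$ one has $\omega(\lambda)=2\omega_{1}|\sin(\lambda/2)|$, and the substitutions $u=\lambda/2$ followed by $x=\sin u$ yield
\[
P_{k}(t)=\frac{1}{2\pi\omega_{1}}\int_{-1}^{1}\phi(x)\,\frac{\sin(2\omega_{1}tx)}{x}\,dx,\qquad \phi(x)=\frac{P(2\arcsin x)\,e^{-2ik\arcsin x}}{\sqrt{1-x^{2}}},
\]
so that in particular $\phi(0)=P(0)$.

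Next I would split $\phi(x)=\phi(0)+(\phi(x)-\phi(0))$ and treat the two contributions separately. The constant piece gives
\[
\frac{\phi(0)}{2\pi\omega_{1}}\int_{-2\omega_{1}t}^{2\omega_{1}t}\frac{\sin y}{y}\,dy\xrightarrow[t\to\infty]{}\frac{P(0)}{2\pi\omega_{1}}\cdot\pi=\frac{P(0)}{2\omega_{1}},
\]
by the Dirichlet integral. The residual piece takes the form
\[
\frac{1}{2\pi\omega_{1}}\int_{-1}^{1}\psi(x)\sin(2\omega_{1}tx)\,dx,\qquad \psi(x)=\frac{\phi(x)-\phi(0)}{x},
\]
which tends to $0$ as $t\to\infty$ by the Riemann\tire Lebesgue lemma, provided $\psi\in L^{1}(-1,1)$.

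The only thing that must be verified, and the sole (minor) obstacle, is the integrability of $\psi$. Near $x=0$ the smoothness of $P$ (the hypothesis $P\in C^{n}$ with $n\geqslant 6$ is much more than needed) together with the analyticity of $\arcsin$ makes $\phi$ of class $C^{1}$, so $\psi$ is bounded in a neighbourhood of $0$. Near $x=\pm 1$ the factor $1/x$ is bounded while $\phi(x)-\phi(0)$ inherits from $1/\sqrt{1-x^{2}}$ an integrable singularity, so $\psi\in L^{1}$ there as well. Combining the two contributions yields $\lim_{t\to\infty}P_{k}(t)=P(0)/(2\omega_{1})$, as claimed.
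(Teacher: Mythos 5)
Your argument is correct, and it reaches the limit by a genuinely different route from the paper. The paper performs the same splitting-off of the value $P(0)$, but does it before any change of variables: it writes $P_{k}(t)=\tilde{P}_{k}(t)+P(0)I_{k}(t)$, kills $\tilde{P}_{k}$ by Riemann\tire Lebesgue, and then evaluates $\lim_{t\to\infty}I_{k}(t)$ through the identity $\frac{d}{dt}I_{k}(t)=J_{2k}(2\omega_{1}t)$ together with the tabulated integral $\int_{0}^{\infty}J_{2k}(s)\,ds=1$. You instead flatten the phase first, via $u=\lambda/2$, $x=\sin u$, so that the oscillation becomes literally $\sin(2\omega_{1}tx)/x$; the main term is then the classical Dirichlet integral and no Bessel functions appear. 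Your version is more elementary and self-contained (no appeal to special-function tables), and it has the additional merit of making the Riemann\tire Lebesgue step unimpeachable: in the paper's decomposition the oscillating factor of $\tilde{P}_{k}(t)$ is $\sin(t\omega(\lambda))$ rather than $e^{it\lambda}$, so strictly speaking the same substitution you carry out (or a lemma on oscillatory integrals with non-degenerate phase) is needed there anyway. What the paper's route buys in exchange is the exact formula $I_{k}(t)=\int_{0}^{t}J_{2k}(2\omega_{1}s)\,ds$, which is of some independent interest but is not used elsewhere. Your verification of $\psi\in L^{1}$ at $x=0$ and $x=\pm1$ is exactly the point that needed checking, and it is handled correctly under the standing hypothesis $P\in C^{n}$, $n\geqslant6$ (far more regularity than the $C^{1}$ you actually use).
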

\begin{proof} We need another expression for $P_{k}$: 
\begin{align*}
P_{k}(t)&=\frac{1}{2\pi}\int_{0}^{2\pi}P(\lambda)\frac{\sin(t\omega(\lambda))}{\omega(\lambda)}e^{-ik\lambda}d\lambda
\\
        &=\frac{1}{2\pi}\int_{0}^{2\pi}\frac{P(\lambda)-P(0)}{\omega(\lambda)}\sin(t\omega(\lambda))e^{-ik\lambda}d\lambda\\
  &\quad {} +P(0)\frac{1}{2\pi}\int_{0}^{2\pi}\frac{\sin(t\omega(\lambda))}{\omega(\lambda)}e^{-ik\lambda}d\lambda
\\
&=\tilde{P}_{k}(t)+P(0)I_{k}(t),
\end{align*}
where 
\begin{align*}
  \tilde{P}_{k}(t)&=\frac{1}{2\pi}\int_{0}^{2\pi}\frac{P(\lambda)-P(0)}{\omega(\lambda)}\sin(t\omega(\lambda))e^{-ik\lambda}d\lambda,\\
  I_{k}(t)&=\frac{1}{2\pi}\int_{0}^{2\pi}\frac{\sin(t\omega(\lambda))}{\omega(\lambda)}e^{-ik\lambda}d\lambda .
\end{align*}
Since the integrand $[P(\lambda)-P(0)] / \omega(\lambda)$ in
$\tilde{P}_{k}(t)$ is absolutely integrable, we have due to Riemann\tire Lebesgue
theorem 
\[
\lim_{t\rightarrow\infty}\tilde{P}_{k}(t)=0.
\]
Using lemma \ref{besselExpression} we obtain (\ref{Pklim}).
\end{proof}
\begin{lemma} \label{besselExpression} The following equalities
hold: 
\begin{equation}
I_{k}(t)=\int_{0}^{t}J_{2k}(2\omega_{1}s)ds,\quad\lim_{t\rightarrow\infty}I_{k}(t)=\frac{1}{2\omega_{1}}\label{besselEq}
\end{equation}
where $J_{k}(t)$ is the Bessel function of the first kind. \end{lemma}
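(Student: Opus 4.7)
The plan is to reduce both claims to classical identities about the Bessel function $J_{2k}$ via its standard integral representation and its improper integral over $[0,\infty)$.

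For the first equality, I would differentiate $I_k(t)$ under the integral sign (legitimate since the integrand is smooth in $\lambda$ uniformly in $t$ for $\lambda$ away from the zeros of $\omega$, and the differentiated kernel $\cos(t\omega(\lambda))e^{-ik\lambda}$ is bounded), giving
\[
I_k'(t)=\frac{1}{2\pi}\int_{0}^{2\pi}\cos(t\omega(\lambda))e^{-ik\lambda}\,d\lambda.
\]
Substituting $u=\lambda/2$ with $\omega(\lambda)=2\omega_1\sin(\lambda/2)$ yields
\[
I_k'(t)=\frac{1}{\pi}\int_{0}^{\pi}\cos(2\omega_1 t\sin u)\,e^{-2iku}\,du.
\]
The symmetry $u\mapsto \pi-u$ together with $\sin(\pi-u)=\sin u$ and $\sin(2k(\pi-u))=-\sin(2ku)$ kills the imaginary part, leaving
\[
I_k'(t)=\frac{1}{\pi}\int_{0}^{\pi}\cos(2\omega_1 t\sin u)\cos(2ku)\,du.
\]
The same $u\mapsto\pi-u$ symmetry applied to Bessel's classical integral representation
\[
J_{n}(x)=\frac{1}{\pi}\int_{0}^{\pi}\cos(x\sin\theta-n\theta)\,d\theta
\]
with $n=2k$ (an even integer) kills the $\sin(2\omega_1 t\sin\theta)\sin(2k\theta)$ piece and produces exactly the same expression. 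Hence $I_k'(t)=J_{2k}(2\omega_1 t)$, and since $I_k(0)=0$, integrating from $0$ to $t$ gives the stated formula.

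For the limit, I would invoke the classical improper-integral identity $\int_{0}^{\infty}J_{\nu}(x)\,dx=1$ valid for all $\nu>-1$ (in particular $\nu=2k$ for any $k\in\mathbb{Z}$, using $J_{-n}=(-1)^{n}J_{n}$). A change of variables $x=2\omega_1 s$ converts this to $\int_{0}^{\infty}J_{2k}(2\omega_1 s)\,ds=\frac{1}{2\omega_1}$, which is precisely $\lim_{t\to\infty}I_k(t)$. Convergence of the improper integral (as opposed to mere absolute convergence) is what matters here, and it follows from the $O(x^{-1/2})$ decay with oscillation of $J_{2k}$ together with Dirichlet's test; if desired one can alternatively verify the limit directly from the formula just derived by splitting $[0,\infty)$ and exploiting the generating-function identity for $J_n$.

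The only genuinely delicate point is the differentiation under the integral sign in the first step: near $\lambda=0$ and $\lambda=2\pi$ the integrand of $I_k(t)$ has the form $\sin(t\omega(\lambda))/\omega(\lambda)$ which is bounded but whose $t$-derivative $\cos(t\omega(\lambda))$ is only continuous. Dominated convergence handles it directly (the derivative is bounded by $1$), so this is a minor technical check rather than a real obstacle; the substantive content is the algebraic identification of $I_k'(t)$ with $J_{2k}(2\omega_1 t)$.
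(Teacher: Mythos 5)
Your proof is correct and follows essentially the same route as the paper: differentiate $I_k(t)$ under the integral sign, use the substitution $\lambda=2u$ and the $u\mapsto\pi-u$ symmetry to identify $I_k'(t)$ with the classical integral representation of $J_{2k}(2\omega_1 t)$, and then conclude the limit from $\int_0^\infty J_\nu(x)\,dx=1$ (the paper cites Gradshteyn--Ryzhik 6.511(1) for this). The only cosmetic difference is that the paper writes the identification as $\tfrac12\bigl(J_{-2k}+J_{2k}\bigr)=J_{2k}$ rather than symmetrizing Bessel's integral directly.
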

\begin{proof} From the definition we have: 
\[
I_{k}(t)=\frac{1}{2\pi}\int_{0}^{2\pi}\frac{\sin(t\omega(\lambda))}{\omega(\lambda)}e^{-ik\lambda}d\lambda=\frac{1}{\pi}\int_{0}^{\pi}\frac{\sin(2\omega_{1}t\sin\lambda)}{2\omega_{1}\sin\lambda}e^{-2ik\lambda}d\lambda.
\]
Thus 
\begin{align*}
\frac{d}{dt}I_{k}(t)&=\frac{1}{\pi}\int_{0}^{\pi}\cos(2\omega_{1}t\sin\lambda)e^{-2ik\lambda}d\lambda
\\
                    &=\frac{1}{\pi}\int_{0}^{\pi}\cos(2\omega_{1}t\sin\lambda)\cos(2k\lambda)d\lambda \\
  &\quad {} -\frac{i}{\pi}\int_{0}^{\pi}\cos(2\omega_{1}t\sin\lambda)\sin(2k\lambda)d\lambda .
\end{align*}
The second term equals zero. To see this one should make the substitution
$x=\pi-\lambda$. Whence 
\begin{align*}
  \frac{d}{dt}I_{k}(t)&=\frac{1}{2\pi}\int_{0}^{\pi}\cos(2\omega_{1}t\sin\lambda+2k\lambda)d\lambda\\
  &\quad {} +\frac{1}{2\pi}\int_{0}^{\pi}\cos(2\omega_{1}t\sin\lambda-2k\lambda)d\lambda
\\
&=\frac{1}{2}(J_{-2k}(2\omega_{1}t)+J_{2k}(2\omega_{1}t))=J_{2k}(2\omega_{1}t).
\end{align*}
It proves the first equality in (\ref{besselEq}). The second one
follows from \cite{GR}, p.\thinspace 1036, 6.511 (1).
\end{proof}

\begin{lemma} \label{ozeroCg} Consider the integral 
\[
C[g](t)=\frac{1}{2\pi}\int_{0}^{2\pi}g(\lambda)\cos(t\omega(\lambda))d\lambda,
\]
where $g(\lambda)\in C^{n}(\mathbb{R}),\ n\geqslant6$ is a complex
valued $2\pi$-periodic function. Then 
\[
C[g](t)=\frac{1}{\sqrt{t}}\frac{g(\pi)}{\sqrt{\pi\omega_{1}}}\cos\Bigl(2\omega_{1}t-\frac{\pi}{4}\Bigr)+b\frac{\cos(2\omega_{1}t- \pi / 4)}{t\sqrt{t}}+O(t^{-2}),
\]
for some complex constant $b\in\mathbb{C}$.
\end{lemma}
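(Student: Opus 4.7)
The plan is to apply the stationary phase method. With $\omega_0 = 0$, the phase $\omega(\lambda) = 2\omega_1 \sin(\lambda/2)$ is smooth on $[0, 2\pi]$ with a unique non-degenerate critical point at $\lambda = \pi$, where $\omega(\pi) = 2\omega_1$, $\omega'(\pi) = 0$, and $\omega''(\pi) = -\omega_1/2$. Moreover, the identity $\omega(\pi + u) = 2\omega_1 \cos(u/2)$ shows that $\omega$ is even about $\pi$, so all odd-order derivatives of $\omega$ vanish at $\lambda = \pi$. Writing $\cos(t\omega) = (e^{it\omega} + e^{-it\omega})/2$ and $C[g](t) = (I_+(t) + I_-(t))/(4\pi)$ with $I_\pm(t) = \int_0^{2\pi} g(\lambda)\, e^{\pm it\omega(\lambda)}\,d\lambda$, I would treat each $I_\pm$ separately by localization: pick a smooth cutoff $\chi$ supported near $\pi$ and split $I_\pm = I_\pm^{\mathrm{loc}} + I_\pm^{\mathrm{far}}$.

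On the support of $1-\chi$, $\omega'$ does not vanish, so iterated integration by parts for $I_\pm^{\mathrm{far}}$ produces an expansion in integer powers of $1/t$ with boundary contributions from $\lambda = 0$ and $\lambda = 2\pi$. A direct computation using $g(0) = g(2\pi)$, $\omega(0) = \omega(2\pi) = 0$, and $\omega'(0) = -\omega'(2\pi) = \omega_1$ yields $I_\pm^{\mathrm{far}}(t) = \pm 2ig(0)/(t\omega_1) + O(t^{-2})$; the leading $1/t$ contributions cancel in the sum. This cancellation is structurally forced by the fact that $\cos(t\omega(\lambda))$ is smoothly $2\pi$-periodic in $\lambda$, while the individual factors $e^{\pm it\omega}$ are not.

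For the stationary pieces $I_\pm^{\mathrm{loc}}$, I would apply the Morse change of variables $\omega(\lambda) - 2\omega_1 = \mp s^2$ near $\lambda = \pi$, reducing each to Gaussian moments $\int s^{2m}\, e^{\mp its^2}\,ds$ of order $t^{-m-1/2}$ weighted by the Taylor coefficients of the Jacobian $g(\psi(s))\psi'(s)$ at $s = 0$. Because $\omega^{(3)}(\pi) = 0$, the Morse diffeomorphism satisfies $\psi''(0) = 0$, which simplifies the computation. The $m = 0$ contributions from $I_+^{\mathrm{loc}} + I_-^{\mathrm{loc}}$ combine via $e^{i\theta} + e^{-i\theta} = 2\cos\theta$ into $4g(\pi)\sqrt{\pi/(\omega_1 t)}\,\cos(2\omega_1 t - \pi/4)$, and after dividing by $4\pi$ this matches the claimed leading term. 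The $m = 1$ contributions give a term of order $t^{-3/2}$ with complex coefficient determined by $g(\pi)$, $g''(\pi)$, $\omega_1$, and $\omega^{(4)}(\pi) = \omega_1/8$, yielding the claimed subleading term $b \cdot \cos(2\omega_1 t - \pi/4)/(t\sqrt{t})$; all remaining stationary and boundary contributions are of order $O(t^{-2})$.

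The main technical obstacle is the bookkeeping between the two distinct sources of asymptotics — stationary contributions at half-integer powers of $t^{-1}$ and boundary IBP contributions at integer powers — and in particular verifying that the $O(1/t)$ boundary terms cancel exactly in the sum $I_+^{\mathrm{far}} + I_-^{\mathrm{far}}$. The smoothness hypothesis $n \geq 6$ provides ample derivatives of $g$ to carry the Morse expansion past the $t^{-3/2}$ term and to justify iterating the boundary IBP enough times to obtain an $O(t^{-2})$ remainder.
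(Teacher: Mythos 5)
Your argument is essentially the paper's: both reduce $C[g]$ to the two oscillatory integrals $\int_0^{2\pi} g\,e^{\pm it\omega}\,d\lambda$, apply the stationary phase method at the single interior critical point $\lambda=\pi$ (where $\omega''(\pi)=-\omega_1/2$), and observe that the $O(1/t)$ boundary contributions from $\lambda=0,2\pi$ cancel because $\omega(0)=\omega(2\pi)=0$, $\omega'(0)=-\omega'(2\pi)=\omega_1$ and $g$ is $2\pi$-periodic; the paper simply quotes Erd\'elyi's formula where you carry out the localization, the integration by parts, and the Morse-lemma computation explicitly, and your leading coefficient $g(\pi)/\sqrt{\pi\omega_1}$ checks out. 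One caveat: the $m=1$ stationary contribution to $I_\pm^{\mathrm{loc}}$ carries the phase $e^{\pm i(2\omega_1 t-3\pi/4)}$ (each successive half-power of $t^{-1}$ in the stationary-phase expansion gains an extra factor $e^{\mp i\pi/2}$), so the combined $t^{-3/2}$ term is a multiple of $\cos(2\omega_1 t-3\pi/4)=\sin(2\omega_1 t-\pi/4)$ rather than of $\cos(2\omega_1 t-\pi/4)$ as you assert --- a slip inherited from the lemma's own statement, and harmless in the sequel, where only the $O(t^{-3/2})$ magnitude and the bounded antiderivative of the oscillatory factor are used.
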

\begin{proof}
We apply the stationary phase method. The phase function $\omega(\lambda)=2\omega_{1}\sin(\lambda / 2)$
has the only one critical point $\lambda_{0}=\pi$ on the interval
$[0,2\pi]$. The contribution to the asymptotics of $C[g](t)$ as
$t\rightarrow\infty$ is determined by the boundary points $0,2\pi$
and the critical point $\pi$. The contribution from the boundary
points has the order of $t^{-2}$. Indeed the leading term of the
asymptotic expansion corresponding to the point $0$ is: 
\[
-\frac{1}{2}\Bigl(\frac{g(0)e^{i\omega(0)}}{it\omega'(0)}+\frac{g(0)e^{-i\omega(0)}}{-it\omega'(0)}\Bigr)=0.
\]
Analogously one has for the point $2\pi$. Hence the contribution
to the asymptotics of $C[g](t)$ up to the order $t^{-2}$ determined
by the stationary point. From \cite{Erdelyi} we have the formula
\[
C[g](t)=\frac{1}{2\pi}\frac{\cos(2\omega_{1}t- \pi / 4)}{\sqrt{t}}\Bigl(g(\pi)\sqrt{\frac{4\pi}{\omega_{1}}}+b\frac{1}{t}+O(t^{-2})\Bigr)
\]
for some constant $b$. This completes the proof of Lemma \ref{ozeroCg}.
\end{proof}

Let us continue the proof of Theorem \ref{asympbehezero}. From Lemmas
\ref{qQPformulaLemma} and \ref{ozeroCg} we have: 
\[
Q_{k}(t)\asymp\frac{1}{\sqrt{t}}\frac{(-1)^{k}Q(\pi)}{\sqrt{\pi\omega_{1}}}\cos\Bigl(2\omega_{1}t-\frac{\pi}{4}\Bigr)\ \mathrm{as}\ t\rightarrow\infty.
\]
Using (\ref{Pklim}) we obtain 
\[
\int_{t}^{+\infty}\frac{d}{ds}P_{k}(s)ds=\lim_{T\rightarrow\infty}\int_{t}^{T}\frac{d}{ds}P_{k}(s)ds=\lim_{T\rightarrow\infty}P_{k}(T)-P_{k}(t)=\frac{P(0)}{2\omega_{1}}-P_{k}(t).
\]
Whence due to Lemma \ref{ozeroCg} we get 
\begin{align}
P_{k}(t)&=\frac{P(0)}{2\omega_{1}}-\int_{t}^{+\infty}\frac{d}{ds}P_{k}(s)ds=\frac{P(0)}{2\omega_{1}}-\int_{t}^{+\infty}C[P(\lambda)e^{-ik\lambda}](s)ds
\nonumber \\
        &=\frac{P(0)}{2\omega_{1}}-\int\limits_{t}^{+\infty}\Bigl(\frac{1}{\sqrt{s}}\frac{(-1)^{k}P(\pi)}{\sqrt{\pi\omega_{1}}}
          \cos\Bigl(2\omega_{1}s-\frac{\pi}{4}\Bigr)+b\frac{\cos(2\omega_{1}s- \pi / 4)}{s\sqrt{s}}\Bigr)ds \nonumber \\
  &\qquad {} +O(t^{-1}) . \label{PkexpressionINt}
\end{align}

Applying the second mean-value theorem we have that for all $T\geqslant t$
there is a $\tau\in[t,T]$ such that 
\[
  \int_{t}^{T} \! \frac{\cos(2\omega_{1}s \! - \! \pi / 4)}{s\sqrt{s}}ds=\frac{1}{t\sqrt{t}}\int_{t}^{\tau} \! \cos\Bigl(2\omega_{1}s-\frac{\pi}{4}\Bigr)d\tau+\frac{1}{T\sqrt{T}}
  \int_{\tau}^{T} \! \cos\Bigl(2\omega_{1}s-\frac{\pi}{4}\Bigr)ds.
\]
It is clear that 
\[
\sup_{-\infty<a<b<+\infty}\biggl| \int_{a}^{b}\cos\Bigl(2\omega_{1}s-\frac{\pi}{4}\Bigr)ds\biggr| =\frac{1}{\omega_{1}} .
\]
Therefore 
\[
\biggl| \int_{t}^{T}\frac{\cos(2\omega_{1}s- \pi / 4)}{s\sqrt{s}}ds\biggr| \leqslant\frac{2}{\omega_{1}}\frac{1}{t\sqrt{t}}.
\]
So we obtain 
\begin{equation}
\biggl| \int_{t}^{+\infty}\frac{\cos\Bigl(2\omega_{1}s- \pi / 4 \Bigr)}{s\sqrt{s}}ds\biggr|=O(t^{-3/2}).\label{costhreehalfestim}
\end{equation}
Integrating by parts we have 
\begin{align*}
  &\int_{t}^{+\infty}\frac{1}{\sqrt{s}}\cos\Bigl(2\omega_{1}s-\frac{\pi}{4}\Bigr)ds\\
  &\quad =-\frac{1}{2\omega_{1}\sqrt{t}}\sin\Bigl(2\omega_{1}t-\frac{\pi}{4}\Bigr)+\frac{1}{4\omega_{1}}\int_{t}^{+\infty}\frac{1}{s\sqrt{s}}
    \sin\Bigl(2\omega_{1}s-\frac{\pi}{4}\Bigr)ds=
\\
&\quad =-\frac{1}{2\omega_{1}\sqrt{t}}\sin\Bigl(2\omega_{1}t-\frac{\pi}{4}\Bigr)+O(t^{-3/2}).
\end{align*}
In the last equality we have used the same estimate for the remainder
term as in (\ref{costhreehalfestim}), which can be easily proved
in the same way.

Summing up obtained estimates for (\ref{PkexpressionINt}) we get
\[
P_{k}(t)=\frac{P(0)}{2\omega_{1}}+\frac{(-1)^{k}P(\pi)}{2\omega_{1}\sqrt{\pi\omega_{1}t}}\sin\Bigl(2\omega_{1}t-\frac{\pi}{4}\Bigr)+O(t^{-3/2}).
\]
This completes the proof of Theorem \ref{asympbehezero}.

\subsection{Acknowledgment}

We would like to thank professor Vadim Malyshev for stimulating discussions
and numerous remarks.


\begin{thebibliography}{10}
\bibitem{Bogolyubov}
{    N.N.~Bogolyubov} (1945)
\newblock {\it  On Some Statistical Methods
 in Mathematical Physics}.
\newblock  Ac.\ Sci.\ USSR, Kiev.

\bibitem{LM1}
{    A.~Lykov and V.~Malyshev} (2017)
\newblock  From the $N$-body problem to
 Euler equations.
\newblock {\it   Russian Journal of Mathematical Physics\/} {\bf 1} (24),
79--95. 

\bibitem{LMM}
{    F.~Lykov, V.~Malyshev and M.~Melikian} (2017)
\newblock  Phase diagram
  for one-way traffic flow with local control.
\newblock  {\it Physica A\/} {\bf 486}, 849--866.


\bibitem{Dud1}
{    T.~Dudnikova} (2018)
\newblock  Behavior for large time of a two-component
  chain of harmonic oscillators.
\newblock  {\it Russian Journal of Mathematical
Physics\/} {\bf 4} (25), 470--491. 

\bibitem{Dud2}
{    T.~Dudnikova} (2016)
\newblock  Long-time asymptotics of solutions
  to a hamiltonian system on a lattice.
\newblock  {\it Journal of Mathematical Sciences\/}
{\bf 219} (1), 69--85. 

\bibitem{DudKomechSpohn}
{    T.~Dudnikova, A.~Komech and H.~Spohn} (2003)
\newblock  On
the convergence to statistical equilibrium for harmonic crystals.
\newblock{\it J.\ Math.\ Phys.}  {\bf 44} (6), 2596--2620.

\bibitem{LanfordLebowitz}
{    O.~Lanford and J.~Lebowitz} (1975)
\newblock  Time evolution
  and ergodic properties of harmonic systems.
\newblock  In: ~{\it Dynamical Systems, Theory and Applications}, J.~Moser (eds). Lect.\ Notes  Phys.\
 {\bf 38}, 144--177. \newblock Springer, Berlin, Heidelberg. 

\bibitem{SpohnLebowitz}
{    H.~Spohn and J.~Lebowitz} (1977)
\newblock  Stationary non-equilibrium
  states of infinite harmonic systems.
\newblock {\it  Commun.\ Math.\ Phys.} {\bf 54}, 97--120.


\bibitem{BPT}
{    C.~Boldrighini,  A.~Pellegrinotti and L.~Triolo L} (1983)
\newblock  Convergence
  to stationary states for infinite harmonic systems.
\newblock {\it   J.\ Stat.\ Phys.}
{\bf 30} (1), 123--155.

\bibitem{Hemmen}
{    J.~Hemmen} (1980)
\newblock  Dynamics and ergodicity of the infinite
  harmonic crystal.
\newblock  {\it Physics Reports\/} {\bf 65} (2), 43--149. 

\bibitem{Fox}
{    R.~Fox} (1983)
\newblock Long-time tails and diffusion.
\newblock {\it Phys.\ Rev.\ A\/} {\bf 27} (6), 3216--3233. 

\bibitem{FlorencioLee}
{    J.~Florencio and Howard Lee} (1985)
\newblock  Exact time evolution
  of a classical harmonic-oscillator chain.
\newblock {\it  Phys.\ Rev.\ A\/} {\bf 31}
(5), 3221--3236.

\bibitem{DalKrein}
{   Ju.L.~Daleckii and M.G.~Krein} (1974)
\newblock {\it Stability of Solutions
of Differential Equations in Banach Space}. AMS.

\bibitem{Deimling}
{    K.~Deimling} (1977)
\newblock  {\it Ordinary Differential Equations
  in Banach Spaces}.
\newblock  Springer.  

\bibitem{GR}
{     I.S.~Gradshteyn and I.M.~Ryzhik} (2007)
\newblock  {\it Table of Integrals, Series, and Products.}
  Academic Press.

\bibitem{Erdelyi}
 {    A.~Erdelyi} (1956)
\newblock  {\it Asymptotic Expansions}.
  Dover Pub., New York. 

\bibitem{Fedoruk}
  {   M.V.~Fedoryuk} (1977)
  \newblock \newblock  {\it Saddle Point Method}.
  \newblock \newblock Nauka, Moscow. (In Russian).
\end{thebibliography}
\end{document}